\newtheorem{theorem}{Theorem}
\newtheorem{example}[theorem]{Example}
\newcommand{\lcm}{{\mathrm{lcm}}}
\newcommand{\gf}{{\mathrm{GF}}}
\newcommand{\m}{\mathbb{M}}
\newcommand{\cP}{{\mathcal{P}}}
\newcommand{\cB}{{\mathcal{B}}}
\newcommand{\C}{{\mathcal{C}}}
\newcommand{\bzero}{{\mathbf{\bar{0}}}}
\newcommand{\bone}{{\mathbf{\bar{1}}}}
\newcommand{\bD}{{\mathbb{D}}}
\begin{document}

\begin{frontmatter}

%% Title, authors and addresses

%% use the tnoteref command within \title for footnotes;
%% use the tnotetext command for the associated footnote;
%% use the fnref command within \author or \address for footnotes;
%% use the fntext command for the associated footnote;
%% use the corref command within \author for corresponding author footnotes;
%% use the cortext command for the associated footnote;
%% use the ead command for the email address,
%% and the form \ead[url] for the home page:
%%
%% \title{Title\tnoteref{label1}}
%% \tnotetext[label1]{}
%% \author{Name\corref{cor1}\fnref{label2}}
%% \ead{email address}
%% \ead[url]{home page}
%% \fntext[label2]{}
%% \cortext[cor1]{}
%% \address{Address\fnref{label3}}
%% \fntext[label3]{}

\title{Infinite families of $t$-designs from a type of five-weight codes}
%\tnotetext[fn1]{C. Ding's research was supported by the Hong Kong Research Grants Council,
%Proj. No. 123456.}

%% use optional labels to link authors explicitly to addresses:
%% \author[label1,label2]{<author name>}
%% \address[label1]{<address>}
%% \address[label2]{<address>}
\author{Cunsheng Ding}
\ead{cding@ust.hk}
%\author[lcj]{Chengju Li}
%\ead{lichengju1987@163.com}

%\cortext[lcj]{Corresponding author}
\address{Department of Computer Science
                                                  and Engineering, The Hong Kong University of Science and Technology,
                                                  Clear Water Bay, Kowloon, Hong Kong, China}
%\address[lcj]{School of Computer Science and Software Engineering, East China Normal University, Shanghai, 200062, China}

\begin{abstract}
It has been known for a long time that $t$-designs can be employed to construct both linear and nonlinear codes 
and that the codewords of a fixed weight in a code may hold a $t$-design. While a lot of progress in the direction 
of constructing codes from $t$-designs has been made, only a small amount of work on the construction of 
$t$-designs from codes has been done. The objective of this paper is to construct infinite families of $2$-designs 
and $3$-designs from a type of binary linear codes with five-weights. The total number of $2$-designs and $3$-designs 
obtained in this paper are exponential in any odd $m$ and the block size of the designs varies in a huge range. 
\end{abstract}

\begin{keyword}
Cyclic code \sep linear code \sep weight distribution \sep $t$-design. 
%% PACS codes here, in the form: \PACS code \sep code

%% MSC codes here, in the form: \MSC code \sep code
%% or \MSC[2008] code \sep code (2000 is the default)
\MSC  05B05 \sep 51E10 \sep 94B15 

\end{keyword}

\end{frontmatter}

\section{Introduction}

We start with a brief recall of $t$-designs. 
Let $\cP$ be a set of $v \ge 1$ elements, and let $\cB$ be a set of $k$-subsets of $\cP$, where $k$ is
a positive integer with $1 \leq k \leq v$. Let $t$ be a positive integer with $t \leq k$. The pair
$\bD = (\cP, \cB)$ is called a $t$-$(v, k, \lambda)$ {\em design\index{design}}, or simply {\em $t$-design\index{$t$-design}}, if every $t$-subset of $\cP$ is contained in exactly $\lambda$ elements of
$\cB$. The elements of $\cP$ are called points, and those of $\cB$ are referred to as blocks.
We usually use $b$ to denote the number of blocks in $\cB$.  A $t$-design is called {\em simple\index{simple}} if $\cB$ does not contain repeated blocks. In this paper, we consider only simple 
$t$-designs.  A $t$-design is called {\em symmetric\index{symmetric design}} if $v = b$. It is clear that $t$-designs with $k = t$ or $k = v$ always exist. Such $t$-designs are {\em trivial}. In this paper, we consider only $t$-designs with $v > k > t$.
A $t$-$(v,k,\lambda)$ design is referred to as a {\em Steiner system\index{Steiner system}} if $t \geq 2$ and $\lambda=1$, and is denoted by $S(t,k, v)$.

A necessary condition for the existence of a $t$-$(v, k, \lambda)$ design is that
\begin{eqnarray}\label{eqn-tdesignnecessty}
\binom{k-i}{t-i} \mbox{ divides } \lambda \binom{v-i}{t-i}
\end{eqnarray}
for all integer $i$ with $0 \leq i \leq t$.

The interplay between codes and $t$-designs goes in two directions. In one direction, the incidence matrix 
of any $t$-design generates a linear code over any finite field $\gf(q)$. A lot of progress in this direction has 
been made and documented in the literature (see, for examples, \cite{AK92}, \cite{DingBook}, 
\cite{Tonchev,Tonchevhb}). In the other direction, 
the codewords of a fixed Hamming weight in a linear or nonlinear code may hold a $t$-design. 
Some linear and nonlinear codes were employed to construct $t$-designs  
\cite{AK92,HKM04,HMT05,KM00,MT04,Tonchev,Tonchevhb}. Binary and ternary Golay codes of certain parameters give  
$4$-designs and $5$-designs. However, the largest $t$ for which an infinite family of $t$-designs 
is derived directly from codes is $t=3$. According to the references \cite{AK92}, 
\cite{Tonchev,Tonchevhb} and \cite{KLhb}, not much progress on the construction of $t$-designs from codes has 
been made so far, while many other constructions of $t$-designs are documented in the 
literature (\cite{BJL,CMhb,KLhb,RR10}).    

The objective of this paper is to construct infinite families of $2$-designs and $3$-designs from a type 
of binary linear codes with five weights. The obtained $t$-designs depend only on the weight distribution 
of the underlying binary codes. The total number of $2$-designs and $3$-designs presented in this paper 
are exponential in $m$, where $m \geq 5$ is an odd integer.  In addition, the block size of the designs 
can vary in a huge range.  

\section{The classical construction of $t$-designs from codes}

Let $\C$ be a $[v, \kappa, d]$ linear code over $\gf(q)$. Let $A_i:=A_i(\C)$, which denotes the
number of codewords with Hamming weight $i$ in $\C$, where $0 \leq i \leq v$. The sequence 
$(A_0, A_1, \cdots, A_{v})$ is
called the \textit{weight distribution} of $\C$, and $\sum_{i=0}^v A_iz^i$ is referred to as
the \textit{weight enumerator} of $\C$. For each $k$ with $A_k \neq 0$,  let $\cB_k$ denote
the set of the supports of all codewords with Hamming weight $k$ in $\C$, where the coordinates of a codeword
are indexed by $(0,1,2, \cdots, v-1)$. Let $\cP=\{0, 1, 2, \cdots, v-1\}$.  The pair $(\cP, \cB_k)$
may be a $t$-$(v, k, \lambda)$ design for some positive integer $\lambda$. The following
theorems, developed by Assumus and Mattson, show that the pair $(\cP, \cB_k)$ defined by 
a linear code is a $t$-design under certain conditions.

\begin{theorem}\label{thm-AM1}[Assmus-Mattson Theorem \cite{AM74}, \cite[p. 303]{HP03}] 
Let $\C$ be a binary $[v, \kappa, d]$ code. Suppose $\C^\perp$ has minimum weight $d^\perp$.
Suppose that $A_i=A_i(\C)$ and $A_i^\perp=A_i(\C^\perp)$, for $0 \leq i \leq v$, are the
weight distributions of $\C$ and $\C^\perp$, respectively. Fix a positive integer $t$
with $t < d$, and let $s$ be the number of $i$ with $A_i^\perp \ne 0$ for $0 < i \leq v-t$.
Suppose that $s \leq d -t$. Then
\begin{itemize}
\item the codewords of weight $i$ in $\C$ hold a $t$-design provided that $A_i \ne 0$ and
      $d \leq i \leq v$, and
\item the codewords of weight $i$ in $\C^\perp$ hold a $t$-design provided that
      $A_i^\perp \ne 0$ and $d^\perp \leq i \leq v$.
\end{itemize}
\end{theorem}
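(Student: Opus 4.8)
The plan is to convert the $t$-design assertion into an invariance property of shortened codes, and then to pin that invariance down with the MacWilliams identity, using the gap $t<d$ and the sparsity measured by $s$. For a set $S$ of coordinates with $|S|\le t$, let $\C_S$ denote the code of length $v-|S|$ obtained by keeping the codewords of $\C$ that vanish on $S$ and deleting the coordinates in $S$; thus $A_\ell(\C_S)$ counts the weight-$\ell$ codewords of $\C$ whose support avoids $S$. Inclusion--exclusion over the subsets of a fixed $t$-set shows that the supports of the weight-$i$ codewords of $\C$ form a $t$-design exactly when $A_i(\C_S)$ depends only on $|S|$ for every $S$ with $|S|\le t$, and the analogous statement for $\C^\perp$ characterizes the second family. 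So the whole theorem reduces to showing that the weight distributions of $\C_S$ and of $(\C^\perp)_S$ depend only on $|S|$.

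The engine is cleanest on the dual side. Because shortening does not change Hamming weights, every nonzero weight of $(\C^\perp)_S$ already occurs as a nonzero weight of $\C^\perp$, so for $|S|=t$ the code $(\C^\perp)_S$ has at most $s$ nonzero weights in the relevant range; hence its weight enumerator carries at most $s$ unknown coefficients. On the other hand the dual of $(\C^\perp)_S$ is the code obtained by puncturing $\C$ on $S$, and puncturing at $|S|\le t$ positions lowers the minimum distance by at most $t$, so this punctured code has minimum distance at least $d-t$; its $d-t$ lowest weight coefficients are therefore known (one leading $1$ followed by zeros) and independent of $S$. Feeding these known low-order dual coefficients into the MacWilliams identity (equivalently, the first few Pless power moments) produces at least $d-t$ linear equations for the at most $s$ unknown coefficients of $(\C^\perp)_S$. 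For $|S|<t$ both counts shift by the same amount, so the balance persists.

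At this point the hypothesis $s\le d-t$ is exactly what is needed: there are at least as many equations as unknowns, and I would show that the relevant coefficient matrix, whose entries are binomial (Krawtchouk) numbers, is nonsingular, so the unknowns are determined uniquely. The solution does not refer to $S$ beyond its size, which gives the $S$-invariance of the weight distribution of $(\C^\perp)_S$ and hence, via the reduction above, the designs held by $\C^\perp$. Transporting this back to $\C$ is then a bookkeeping step: the same MacWilliams relation turns the now-determined enumerator of $(\C^\perp)_S$ into a determined enumerator of the punctured code of $\C$ on $S$, and running inclusion--exclusion over the subsets $S'\subseteq S$ recovers the joint distribution of weight-on-$S$ against weight-off-$S$, from which the $S$-invariance of the coefficients $A_i(\C_S)$ follows.

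I expect two places to carry the real difficulty. The first is the nonsingularity of the Krawtchouk-type system, i.e.\ proving that $s\le d-t$ not merely balances the counts but actually forces uniqueness; this is the heart of the matter. The second, more routine but still delicate, is the passage from punctured to shortened distributions in the last step, where one must check that the inclusion--exclusion genuinely isolates the coefficients relevant to the weights $i$ with $d\le i\le v$ (respectively $d^\perp\le i\le v$) named in the two bullet points, and that no hidden dependence on the choice of $S$ survives.
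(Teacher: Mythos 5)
The paper gives no proof of this theorem at all---it is quoted as the classical Assmus--Mattson theorem, with citations to \cite{AM74} and \cite[p.~303]{HP03}---and your outline reconstructs precisely the standard proof found in those cited sources: reduce the design property to $S$-invariance of the shortened weight distributions via inclusion--exclusion, use the shortening/puncturing duality together with $t<d$ to pit at most $s$ unknown coefficients of $(\C^\perp)_S$ against at least $d-t$ MacWilliams (power-moment) equations, and close with nonsingularity of the Krawtchouk-type system. The two points you flag as the real difficulty do go through in the standard way---the coefficient matrix is nonsingular because the degree-$j$ Krawtchouk polynomial has exact degree $j$, so the matrix differs from a Vandermonde matrix in the distinct nonzero weights by an invertible triangular factor, and the punctured-to-shortened passage is a downward induction on weight combined with induction on $|S|$ (with the mild correction that for $|S|<t$ the number of unknowns grows by \emph{at most}, not exactly, the same amount as the number of equations, which still preserves the inequality)---so your proposal is correct and essentially identical to the proof the paper points to.
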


To construct $t$-designs via Theorem \ref{thm-AM1}, we will need the 
following lemma in subsequent sections, which is a variant of the MacWilliam Identity 
\cite[p. 41]{vanLint}. 

\begin{theorem} \label{thm-MI}
Let $\C$ be a $[v, \kappa, d]$ code over $\gf(q)$ with weight enumerator $A(z)=\sum_{i=0}^v A_iz^i$ and let
$A^\perp(z)$ be the weight enumerator of $\C^\perp$. Then
$$A^\perp(z)=q^{-\kappa}\Big(1+(q-1)z\Big)^vA\Big(\frac {1-z} {1+(q-1)z}\Big).$$
\end{theorem}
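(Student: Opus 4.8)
The plan is to derive the stated single-variable formula from the two-variable (homogeneous) form of the MacWilliams identity and then specialize. First I would introduce the homogeneous weight enumerator $W_\C(x,y) = \sum_{\bc \in \C} x^{v-\wt(\bc)} y^{\wt(\bc)} = \sum_{i=0}^v A_i x^{v-i} y^i$, so that $A(z) = W_\C(1, z)$ and likewise $A^\perp(z) = W_{\C^\perp}(1,z)$. The target identity follows once I establish the transform rule
$$W_{\C^\perp}(x,y) = \frac{1}{|\C|}\, W_\C\big(x+(q-1)y,\ x-y\big),$$
since setting $x=1$, $y=z$ and pulling a factor $(1+(q-1)z)^v$ out of $W_\C(1+(q-1)z,\,1-z)$ yields exactly $q^{-\kappa}(1+(q-1)z)^v A\big(\tfrac{1-z}{1+(q-1)z}\big)$, using $|\C| = q^\kappa$.

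The core of the argument is a character-sum (Poisson summation) computation. I would fix a nontrivial additive character $\chi$ of $(\gf(q),+)$ and recall the orthogonality relation $\sum_{\bc \in \C} \chi(\langle \bc, \bu\rangle) = |\C|$ when $\bu \in \C^\perp$ and $0$ otherwise; this is precisely where membership in the dual code enters. For any function $f$ on $\gf(q)^v$ with Fourier transform $\hat f(\bc) = \sum_{\bu \in \gf(q)^v} \chi(\langle \bc, \bu\rangle) f(\bu)$, interchanging the order of summation over the two sets yields $\sum_{\bu \in \C^\perp} f(\bu) = |\C|^{-1}\sum_{\bc \in \C}\hat f(\bc)$.

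I would then apply this to the separable function $f(\bu) = \prod_{j=1}^v g(u_j)$, where $g(0)=x$ and $g(a)=y$ for $a \neq 0$, so that $f(\bu) = x^{v-\wt(\bu)} y^{\wt(\bu)}$ and hence $\sum_{\bu \in \C^\perp} f(\bu) = W_{\C^\perp}(x,y)$. Because $f$ factorizes over coordinates, so does its transform: $\hat f(\bc) = \prod_{j=1}^v \hat g(c_j)$. A one-dimensional computation gives $\hat g(0) = x+(q-1)y$ and, for $c \neq 0$, $\hat g(c) = x-y$, using $\sum_{a \in \gf(q)}\chi(ca) = 0$. Thus $\hat f(\bc) = (x+(q-1)y)^{v-\wt(\bc)}(x-y)^{\wt(\bc)}$, and summing over $\bc \in \C$ reproduces $W_\C(x+(q-1)y,\,x-y)$, which establishes the transform rule.

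The main obstacle here is bookkeeping rather than conceptual depth: one must justify the interchange of summations that produces the Poisson identity, and then track the exponents carefully so that the homogeneous substitution collapses cleanly to the single-variable statement. The character orthogonality is the one genuinely structural input; everything downstream is the routine algebra of substituting $x=1$, $y=z$ and factoring out $(1+(q-1)z)^v$.
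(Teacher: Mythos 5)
Your proof is correct. The paper does not prove this theorem at all---it states it as a known variant of the MacWilliams identity with a citation to van Lint---and your character-sum (Poisson summation) argument, passing through the homogeneous enumerator $W_{\C}(x,y)$ and the transform rule $W_{\C^\perp}(x,y)=|\C|^{-1}W_{\C}\bigl(x+(q-1)y,\,x-y\bigr)$, is exactly the standard proof found in the cited references, so you have simply supplied the omitted proof by the canonical route, with all steps (orthogonality, coordinatewise factorization of the transform, and the specialization $x=1$, $y=z$) carried out correctly.
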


Later in this paper, we will need also the following theorem. 

\begin{theorem}\label{thm-allcodes}
Let $\C$ be an $[n, k, d]$ binary linear code, and let $\C^\perp$ denote the dual of $\C$. Denote by $\overline{\C^\perp}$ 
the extended code of $\C^\perp$, and let  $\overline{\C^\perp}^\perp$ denote the dual of  $\overline{\C^\perp}$. Then we 
have the following. 
\begin{enumerate}
\item $\C^\perp$ has parameters $[n, n-k, d^\perp]$, where $d^\perp$ denotes the minimum distance of $\C^\perp$. 
\item $\overline{\C^\perp}$ has parameters $[n+1, n-k, \overline{d^\perp}]$, where $\overline{d^\perp}$ denotes the minimum 
distance of $\overline{\C^\perp}$, and is given by 
\begin{eqnarray*}
\overline{d^\perp} = \left\{ 
\begin{array}{ll}
d^\perp  & \mbox{ if $d^\perp$ is even,}  \\
d^\perp  + 1 & \mbox{ if $d^\perp$ is odd.} 
\end{array} 
\right. 
\end{eqnarray*}
\item $\overline{\C^\perp}^\perp$ has parameters $[n+1, k+1, \overline{d^\perp}^\perp]$, where $\overline{d^\perp}^\perp$ denotes the minimum 
distance of $\overline{\C^\perp}^\perp$. 
Furthermore, $\overline{\C^\perp}^\perp$ has only even-weight codewords, and all the nonzero weights in $\overline{\C^\perp}^\perp$ are 
the following: 
\begin{eqnarray*}
w_1, \, w_2,\, \cdots,\, w_t;\, n+1-w_1, \, n+1-w_2, \, \cdots, \, n+1-w_t; \, n+1,  
\end{eqnarray*} 
where $w_1,\,  w_2,\, \cdots,\, w_t$ denote all the nonzero weights of $\C$. 
\end{enumerate} 
\end{theorem}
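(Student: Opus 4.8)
The plan is to treat the three parts in turn; (1) and (2) are routine and (3) carries the content. For (1), the relation $\dim\C+\dim\C^\perp=n$ gives that $\C^\perp$ is an $[n,n-k,d^\perp]$ code, with $d^\perp$ its minimum distance by definition. For (2), $\overline{\C^\perp}$ is formed by appending to each $\bc\in\C^\perp$ the overall parity bit $\wt(\bc)\bmod 2$; this is a bijection, so $\overline{\C^\perp}$ is an $[n+1,n-k]$ code and every codeword has even weight. Appending the parity bit fixes an even-weight word and raises an odd-weight word by one, so a minimum-weight word of $\C^\perp$ (of weight $d^\perp$) yields weight $d^\perp$ when $d^\perp$ is even and $d^\perp+1$ when $d^\perp$ is odd; in the odd case no word of $\C^\perp$ has even weight strictly below $d^\perp+1$, which gives the stated value of $\overline{d^\perp}$.

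For (3), the dimension is immediate from duality: $\dim\overline{\C^\perp}^\perp=(n+1)-(n-k)=k+1$, so the length is $n+1$ and the dimension $k+1$. To obtain the weights I would describe $\overline{\C^\perp}^\perp$ explicitly. Write a vector of length $n+1$ as $(\bx,x_{n+1})$ and a generic word of $\overline{\C^\perp}$ as $(\bc,\mathbf{1}\cdot\bc)$ with $\bc\in\C^\perp$, where $\mathbf{1}$ is the all-ones word of length $n$. The orthogonality condition $\bx\cdot\bc+x_{n+1}(\mathbf{1}\cdot\bc)=0$ rearranges to $(\bx+x_{n+1}\mathbf{1})\cdot\bc=0$ for all $\bc\in\C^\perp$, and since $(\C^\perp)^\perp=\C$ this holds iff $\bx+x_{n+1}\mathbf{1}\in\C$. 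Splitting on $x_{n+1}\in\{0,1\}$ then identifies $\overline{\C^\perp}^\perp$ as the disjoint union of $\{(\bc,0):\bc\in\C\}$ and $\{(\bc+\mathbf{1},1):\bc\in\C\}$.

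Reading the weights off this description completes (3). A word $(\bc,0)$ has weight $\wt(\bc)$, ranging over $0,w_1,\ldots,w_t$, while a word $(\bc+\mathbf{1},1)$ has weight $\wt(\bc+\mathbf{1})+1=(n-\wt(\bc))+1=n+1-\wt(\bc)$, ranging over $n+1,n+1-w_1,\ldots,n+1-w_t$; here I use the complementation identity $\wt(\bc+\mathbf{1})=n-\wt(\bc)$. The nonzero values are thus exactly $w_1,\ldots,w_t;\,n+1-w_1,\ldots,n+1-w_t;\,n+1$. The all-even-weight property is equivalent to $\mathbf{1}$ (now of length $n+1$) lying in $\overline{\C^\perp}$, since a binary code has only even weights iff its dual contains the all-ones word; this is what I would verify under the hypotheses in force, and it can also be read directly from the weight list once the length is odd and the weights $w_i$ of $\C$ are even.

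The main obstacle I anticipate is the bookkeeping in part (3): folding the parity coordinate into the inner product so as to reach the clean criterion $\bx+x_{n+1}\mathbf{1}\in\C$, and then handling the complementation carefully enough that the two cosets deliver precisely the complementary weight list. The even-weight assertion is the single point where the ambient assumptions on $\C$ genuinely enter, and I would confirm those are satisfied before invoking it.
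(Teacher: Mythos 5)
Your argument is correct, and on the substantive part (3) it is essentially the paper's argument made self-contained: the paper simply cites from \cite[p.~15]{HP03} the generator matrix of $\overline{\C^\perp}^\perp$ obtained by stacking the row $(\bone,1)$ on top of $[\,G \mid \bzero\,]$, which encodes exactly the decomposition $\overline{\C^\perp}^\perp=\{(\bc,0):\bc\in\C\}\cup\{(\bc+\bone,1):\bc\in\C\}$ that you derive from first principles via the identity $(\bx+x_{n+1}\bone)\cdot\bc=0$ and $(\C^\perp)^\perp=\C$; the weight list then falls out the same way in both treatments, so what your route buys is independence from the cited fact, at the cost of a few lines of bookkeeping. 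The one genuine divergence is the even-weight assertion, and there you are more careful than the paper: you correctly note that it is equivalent to $\bone\in\overline{\C^\perp}$ and does \emph{not} follow from the stated hypotheses. Indeed, as a statement about an arbitrary binary $[n,k,d]$ code the clause is false --- take $\C$ to be the $[3,1,3]$ repetition code, whose $\overline{\C^\perp}^\perp$ is $\{0000,1110,0001,1111\}$ with odd weights $1$ and $3$, consistent with the weight list $w_1=3$, $n+1-w_1=1$, $n+1=4$. The paper's proof never addresses this point (its remark that ``$\overline{\C^\perp}$ has only even-weight codewords'' concerns the extended code, not its dual), and the clause is saved only in the paper's application, where $n=2^m-1$ is odd and every nonzero weight of $\C_m$ in Table \ref{tab-zhou3} is even, so that all entries of the weight list are even. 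Your closing hedge --- verify those hypotheses before invoking the even-weight property --- is exactly the right instinct, and is the one thing you should make explicit if this were written up.
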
 

\begin{proof}
The conclusions of the first two parts are straightforward. We prove only the conclusions of the third part below. 

Since $\overline{\C^\perp}$ has length $n+1$ and dimension $n-k$,  the dimension of $\overline{\C^\perp}^\perp$ 
is $k+1$. By assumption, all codes under consideration are binary. By definition, $\overline{\C^\perp}$ has only 
even-weight codewords. Recall that $\overline{\C^\perp}$ is the extended code of $\C^\perp$. It is known that  
the generator matrix of $\overline{\C^\perp}^\perp$ is given by  (\cite[p. 15]{HP03})  
\begin{eqnarray*}
\left[ 
\begin{array}{cc}
\bone & 1 \\
G & \bzero
\end{array}
\right]. 
\end{eqnarray*}
where $\bone=(1 1 1 \cdots 1)$ is the all-one vector of length $n$,  $\bzero=(000 \cdots 0)^T$, which is a column 
vector of length $n$, and $G$ is the generator matrix of $\C$. Notice again that  $\overline{\C^\perp}^\perp$ is binary, the 
desired conclusions on the weights in $\overline{\C^\perp}^\perp$ follow from the relation between the two generator 
matrices of the two codes  $\overline{\C^\perp}^\perp$ and $\C$. 
\end{proof}

\section{A type of binary linear codes with five-weights and related codes}\label{sec-maincodes}

In this section, we first introduce a type of binary linear codes $\C_m$ of length $n=2^m-1$, which has the weight distribution of 
Table \ref{tab-zhou3}, and then analyze their dual codes $\C_m^\perp$, the extended codes $\overline{\C_m^\perp}$, and the duals 
 $\overline{\C_m^\perp}^\perp$.  Such codes will be employed to construct $t$-designs in Sections \ref{sec-2designs} and \ref{sec-3designs}. Examples of such codes will be given 
 in Section \ref{sec-examplecodes}.

\begin{table}[ht]
\caption{The weight distribution of $\C_{m}$ for odd $m$.}\label{tab-zhou3}
\centering
\begin{tabular}{ll}
\hline
Weight $w$    & No. of codewords $A_w$  \\ \hline
$0$                                                        & $1$ \\ 
$2^{m-1}-2^{(m+1)/2}$           & $(2^m-1)\cdot 2^{(m-5)/2}\cdot (2^{(m-3)/2}+1)\cdot (2^{m-1}-1)/3$ \\ 
$2^{m-1}-2^{(m-1)/2}$           & $(2^m-1)\cdot 2^{(m-3)/2}\cdot (2^{(m-1)/2}+1)\cdot (5\cdot 2^{m-1}+4)/3$ \\ 
$2^{m-1}$           &           ${(2^m-1)}\cdot (9\cdot 2^{2m-4}+3\cdot 2^{m-3}+1)$ \\ 
$2^{m-1}+2^{(m-1)/2}$           & $(2^m-1)\cdot 2^{(m-3)/2}\cdot (2^{(m-1)/2}-1)\cdot (5\cdot 2^{m-1}+4)/3$ \\ 
$2^{m-1}+2^{(m+1)/2}$           & $(2^m-1)\cdot 2^{(m-5)/2}\cdot (2^{(m-3)/2}-1)\cdot (2^{m-1}-1)/3$ \\ \hline
\end{tabular}
\end{table}

\begin{theorem}\label{thm-BCHcodeDual}
Let $m \geq 5$ be an odd integer and let $\C_m$ be a binary code with the weight distribution of Table \ref{tab-zhou3}. 
Then the dual code $\C_{m}^\perp$ 
has parameters $[2^m-1, 2^m-1-3m, 7]$, and its weight distribution is given by  
\begin{eqnarray*}
2^{3m}A^\perp_k   &=&   \binom{2^m-1}{k}  +a  U_a(k) +  b  U_b(k)  + c  U_c(k) + d  U_d(k) + e  U_e(k), 
\end{eqnarray*}
where $0 \leq k \leq 2^m-1$, 
\begin{eqnarray*}
a &=&  (2^m-1) 2^{(m-5)/2} (2^{(m-3)/2}+1) (2^{m-1}-1)/3,  \\ 
b &=& (2^m-1) 2^{(m-3)/2} (2^{(m-1)/2}+1) (5\times 2^{m-1}+4)/3, \\ 
c &=& {(2^m-1)} (9\times 2^{2m-4}+3\times 2^{m-3}+1),  \\ 
d &=& (2^m-1) 2^{(m-3)/2} (2^{(m-1)/2}-1) (5\times 2^{m-1}+4)/3, \\ 
e &=&  (2^m-1) 2^{(m-5)/2} (2^{(m-3)/2}-1) (2^{m-1}-1)/3, 
\end{eqnarray*}
and 
\begin{eqnarray*}
U_a(k) =   \sum_{\substack{0 \le i \le 2^{m-1}-2^{(m+1)/2}  \\
0\le j \le 2^{m-1}+2^{(m+1)/2}-1 \\i+j=k}}(-1)^i  \binom{2^{m-1}-2^{(m+1)/2}} {i} \binom{2^{m-1}+2^{(m+1)/2}-1}{j}, 
\end{eqnarray*}
\begin{eqnarray*}
U_b(k)  =  \sum_{\substack{0 \le i \le 2^{m-1}-2^{(m-1)/2} \\
0\le j \le 2^{m-1}+2^{(m-1)/2}-1 \\i+j=k}}(-1)^i  \binom{2^{m-1}-2^{(m-1)/2}} {i} \binom{2^{m-1}+2^{(m-1)/2}-1}{j}, 
\end{eqnarray*}
\begin{eqnarray*}
U_c(k)  =  \sum_{\substack{0 \le i \le 2^{m-1} \\
0\le j \le 2^{m-1}-1 \\i+j=k}}(-1)^i \binom{2^{m-1}} {i}\binom{2^{m-1}-1} {j},   
\end{eqnarray*}
\begin{eqnarray*}
U_d(k)  =  \sum_{\substack{0 \le i \le 2^{m-1}+2^{(m-1)/2} \\ 
0\le j \le 2^{m-1}-2^{(m-1)/2}-1 \\i+j=k}}(-1)^i \binom{2^{m-1}+2^{(m-1)/2}}{i}\binom{2^{m-1}-2^{(m-1)/2}-1}{j},  
\end{eqnarray*}
\begin{eqnarray*} 
U_e(k)  &=&  \sum_{\substack{0 \le i \le 2^{m-1}+2^{(m+1)/2} \\
0\le j \le 2^{m-1}-2^{(m+1)/2}-1 \\i+j=k}}(-1)^i \binom{2^{m-1}+2^{(m+1)/2}}{i}\binom{2^{m-1}-2^{(m+1)/2}-1}{j}. 
\end{eqnarray*}
\end{theorem}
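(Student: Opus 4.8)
The plan is to read the weight enumerator of $\C_m$ off Table \ref{tab-zhou3} and push it through the MacWilliams identity of Theorem \ref{thm-MI} with $q=2$. Write
$$A(z) = 1 + a z^{w_1} + b z^{w_2} + c z^{w_3} + d z^{w_4} + e z^{w_5},$$
where $w_1 = 2^{m-1}-2^{(m+1)/2}$, $w_2 = 2^{m-1}-2^{(m-1)/2}$, $w_3 = 2^{m-1}$, $w_4 = 2^{m-1}+2^{(m-1)/2}$, $w_5 = 2^{m-1}+2^{(m+1)/2}$, and $a,b,c,d,e$ are the listed multiplicities. First I would pin down the dimension by checking the algebraic identity $1 + a + b + c + d + e = 2^{3m}$. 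Since $A(1) = \sum_i A_i = |\C_m|$ and $\C_m$ is linear, this forces $|\C_m| = 2^{3m}$, so $\dim \C_m = 3m$ and hence $\C_m^\perp$ has length $2^m-1$ and dimension $(2^m-1)-3m$, giving the first two parameters.

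For the weight distribution, apply Theorem \ref{thm-MI} with $\kappa = 3m$ and $n = 2^m-1$:
$$A^\perp(z) = 2^{-3m}(1+z)^{2^m-1}\,A\!\left(\frac{1-z}{1+z}\right).$$
Substituting the five-term form of $A$, the summand $(\mathrm{coeff})\,z^{w}$ transforms into $(\mathrm{coeff})\,2^{-3m}(1+z)^{n-w}(1-z)^{w}$, while the constant term $1$ contributes $2^{-3m}(1+z)^{n}$. Expanding $(1-z)^{w} = \sum_i (-1)^i \binom{w}{i} z^i$ and $(1+z)^{n-w} = \sum_j \binom{n-w}{j} z^j$ by the binomial theorem and reading off the coefficient of $z^k$, the constant term yields $2^{-3m}\binom{2^m-1}{k}$ and each weight term yields $2^{-3m}$ times an alternating convolution of two binomials. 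A direct check that $n-w_1 = 2^{m-1}+2^{(m+1)/2}-1$, $n-w_2 = 2^{m-1}+2^{(m-1)/2}-1$, $n-w_3 = 2^{m-1}-1$, $n-w_4 = 2^{m-1}-2^{(m-1)/2}-1$, and $n-w_5 = 2^{m-1}-2^{(m+1)/2}-1$ identifies these convolutions with $U_a(k),U_b(k),U_c(k),U_d(k),U_e(k)$ respectively. Multiplying by $2^{3m}$ produces the stated formula for $2^{3m}A^\perp_k$.

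It remains to prove $d^\perp = 7$. Because the weight distribution of $\C_m^\perp$ is completely determined by that of $\C_m$ through the identity just derived, $d^\perp$ is simply the least $k \ge 1$ with $A^\perp_k \neq 0$; thus it suffices to evaluate the formula at $k = 1,\dots,7$ and show $A^\perp_1 = \cdots = A^\perp_6 = 0$ while $A^\perp_7 > 0$. For these small $k$ each sum $U_a(k),\dots,U_e(k)$ has only $k+1$ terms and can be written as a low-degree polynomial in $2^{(m-1)/2}$ and $2^{(m+1)/2}$; inserting the closed forms of $a,\dots,e$ reduces each vanishing statement to a polynomial identity in $2^{m/2}$ valid for all odd $m \ge 5$. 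I expect this to be the main obstacle: the cancellations producing six consecutive zeros are delicate and demand careful bookkeeping of the alternating binomial convolutions, and strict positivity of $A^\perp_7$ must be confirmed separately. A systematic way to organize this is via the first six Pless power moments, which express $A^\perp_1,\dots,A^\perp_6$ as fixed linear combinations of the moments $\sum_i i^r A_i$ of the explicitly known distribution of $\C_m$; checking that these combinations vanish is equivalent to, but cleaner than, evaluating the convolutions by hand.
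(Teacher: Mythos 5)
Your proposal matches the paper's proof essentially step for step: the weight-distribution formula is obtained exactly as you describe, by pushing the five-term enumerator through the MacWilliams identity so that each weight $w$ contributes the alternating convolution of $\binom{w}{i}$ and $\binom{2^m-1-w}{j}$, and the minimum distance $d^\perp=7$ is then established, as you propose, by evaluating the resulting formula at $k=1,\dots,6$ (the paper writes everything as polynomials in $x=2^{(m-1)/2}$, does $k=1,2$ explicitly and notes the rest are similar) and checking $A_7^\perp=\frac{(x^2-1)(2x^2-1)(x^4-5x^2+34)}{630}>0$. Your suggested shortcut via the Pless power moments is a legitimate equivalent way to organize the vanishing of $A_1^\perp,\dots,A_6^\perp$, but it is an organizational variant rather than a different proof.
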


\begin{proof}
By assumption, the weight enumerator of $\C_{m}$ 
is given by 
$$
A(z)=1+az^{2^{m-1}-2^{(m+1)/2}} +bz^{2^{m-1}-2^{(m-1)/2}} +cz^{2^{m-1}}+dz^{2^{m-1}+2^{(m-1)/2}} +ez^{2^{m-1}+2^{(m+1)/2}}.
$$
It then follows from Theorem \ref{thm-MI} that the weight enumerator of $\C_{m}^\perp$ is given by
\begin{eqnarray*}
2^{3m}A^\perp(z) &=&  (1+z)^{2^m-1}\left[ 1 +
             a\left(\frac {1-z} {1+z}\right)^{2^{m-1}-2^{\frac{m+1}{2}}}  +  b\left(\frac {1-z} {1+z}\right)^{2^{m-1}-2^{ \frac{m-1}{2} }} \right] + \\
& &  (1+z)^{2^m-1}\left[
             c\left(\frac {1-z} {1+z}\right)^{2^{m-1}}+ d\left(\frac {1-z} {1+z}\right)^{2^{m-1}+2^{\frac{m-1}{2} }}  
             + e\left(\frac {1-z} {1+z}\right)^{2^{m-1}+2^{ \frac{m+1}{2} }} \right]. 
\end{eqnarray*}
Hence, we have 
\begin{eqnarray*}
2^{3m}A^\perp(z)      
&=&   (1+z)^{2^m-1} + \\ 
& &             a(1-z)^{2^{m-1}-2^{(m+1)/2}}(1+z)^{2^{m-1}+2^{(m+1)/2}-1}  + \\             
& &       b(1-z)^{2^{m-1}-2^{(m-1)/2}}(1+z)^{2^{m-1}+2^{(m-1)/2}-1}  + \\         
& &     c(1-z)^{2^{m-1}}(1+z)^{2^{m-1}-1} + \\
& &   d(1-z)^{2^{m-1}+2^{(m-1)/2}}(1+z)^{2^{m-1}-2^{(m-1)/2}-1}  + \\
& &   e(1-z)^{2^{m-1}+2^{(m+1)/2}}(1+z)^{2^{m-1}-2^{(m+1)/2}-1}.  
\end{eqnarray*}

Obviously, we have   
\begin{eqnarray*}
(1+z)^{2^m-1} &=& \sum_{k=0}^{2^m-1} \binom{2^m-1}{k}z^k. 
\end{eqnarray*}  
It is easily seen that 
\begin{eqnarray*}
(1-z)^{2^{m-1}-2^{(m+1)/2}}(1+z)^{2^{m-1}+2^{(m+1)/2}-1}  
=  \sum_{k=0}^{2^m-1} U_a(k) z^k 
\end{eqnarray*} 
and 
\begin{eqnarray*}
 (1-z)^{2^{m-1}-2^{(m-1)/2}}(1+z)^{2^{m-1}+2^{(m-1)/2}-1}   
=   \sum_{k=0}^{2^m-1} U_b(k) z^k. 
\end{eqnarray*} 
Similarly, 
\begin{eqnarray*}
(1-z)^{2^{m-1}+2^{(m-1)/2}}(1+z)^{2^{m-1}-2^{(m-1)/2}-1}  
=  \sum_{k=0}^{2^m-1} U_d(k) z^k 
\end{eqnarray*}
and 
\begin{eqnarray*} 
(1-z)^{2^{m-1}+2^{(m+1)/2}}(1+z)^{2^{m-1}-2^{(m+1)/2}-1} 
=  \sum_{k=0}^{2^m-1}  U_e(k) z^k. 
\end{eqnarray*} 
Finally, we have 
\begin{eqnarray*}
(1-z)^{2^{m-1}}(1+z)^{2^{m-1}-1}=  \sum_{k=0}^{2^m-1}  U_c(k) z^k.  
\end{eqnarray*}
Combining these formulas above yields the weight distribution formula for $A_k^\perp$. 

The weight distribution in Table \ref{tab-zhou3} tells us that the dimension of   $\C_{m}$ 
is $3m$. Therefore, the dimension of  $\C_{m}^\perp$ is equal to $2^m-1-3m$. Finally, we 
prove that the minimum distance of $\C_{m}^\perp$  equals $7$. 

We now prove that  $A_k^\perp=0$ for all $k$ with $1 \leq k \leq 6$. Let $x=2^{(m-1)/2}$. With the weight distribution 
formula obtained before, we have 
\begin{eqnarray*}
 \binom{2^m-1}{1} &=& 2x^2 - 1,\\
 a  U_a(1) &=& 1/3 x^7 + 7/12 x^6 - 2/3 x^5 - 7/8 x^4 + 5/12 x^3 + 7/24 x^2 - 1/12 x, \\
 b  U_b(1)  &=& 10/3x^7 + 5/3x^6 - 2/3x^5 + 1/2x^4 - 11/6x^3 - 2/3x^2 + 2/3x, \\
 c  U_c(1)  &=& -9/2x^6 + 3/4x^4 - 5/4x^2 + 1, \\
 d  U_d(1) &=& -10/3x^7 + 5/3x^6 + 2/3x^5 + 1/2x^4 + 11/6x^3 - 2/3x^2 - 2/3x, \\
e  U_e(1) &=& -1/3x^7 + 7/12x^6 + 2/3x^5 - 7/8x^4 - 5/12x^3 + 7/24x^2 + 1/12x.   
\end{eqnarray*} 
Consequently, 
\begin{eqnarray*}
2^{3m} A_1^\perp = \binom{2^m-1}{1}  +a  U_a(1) +  b  U_b(1)  + c  U_c(1) + d  U_d(1) + e  U_e(1)=0.   
\end{eqnarray*}

Plugging $k=2$ into the weight distribution formula above, we get that 
\begin{eqnarray*}
 \binom{2^m-1}{2} &=&  2x^4 - 3x^2 + 1,\\
 a  U_a(2) &=& 7/12x^8 + 5/6x^7 - 35/24x^6 - 13/12x^5 + 7/6x^4 + 1/6x^3 - 7/24x^2 + 1/12x,\\
 b  U_b(2)  &=& 5/3x^8 - 5/3x^7 - 7/6x^6 + 7/6x^5 - 7/6x^4 + 7/6x^3 + 2/3x^2 - 2/3x, \\
 c  U_c(2)  &=& -9/2x^8 + 21/4x^6 - 2x^4 + 9/4x^2 - 1, \\
 d  U_d(2) &=& 5/3x^8 + 5/3x^7 - 7/6x^6 - 7/6x^5 - 7/6x^4 - 7/6x^3 + 2/3x^2 + 2/3x, \\
e  U_e(2) &=& 7/12x^8 - 5/6x^7 - 35/24x^6 + 13/12x^5 + 7/6x^4 - 1/6x^3 - 7/24x^2 - 
    1/12x.   
\end{eqnarray*} 
Consequently, 
\begin{eqnarray*}
2^{3m} A_2^\perp = \binom{2^m-1}{2}  +a  U_a(2) +  b  U_b(2)  + c  U_c(2) + d  U_d(2) + e  U_e(2)=0.   
\end{eqnarray*}

After similar computations with the weight distribution formula, one can prove that  $A_k^\perp=0$ for all $k$ with 
$3 \leq k \leq 6$.

Plugging $k=7$ into the weight distribution formula above, we arrive at 
\begin{eqnarray*}
A_7^\perp =   \frac{(x^2-1) (2x^2 - 1) (x^4 - 5x^2 + 34)}{630} . 
\end{eqnarray*}
Notice that $x^4 - 5x^2 + 34=(x^2-5/2)^2 +34-25/4 >0$. We have $A_7^\perp >0$ for all odd $m \geq 5$. 
 This proves the desired conclusion on the minimum distance 
of $\C_{m}^\perp$.  
\end{proof}

\begin{theorem}\label{thm-lastcode} 
Let $m \geq 5$ be an odd integer and let $\C_m$ be a binary code with the weight distribution of Table \ref{tab-zhou3}. 
The code  $\overline{\C_{m}^\perp}^\perp$ has parameters 
$$ 
\left[2^m, \, 3m+1, \,  2^{m-1}-2^{(m+1)/2}\right],  
$$
and its weight enumerator is given by 
\begin{eqnarray}\label{eqn-doubledual}
\overline{A^\perp}^\perp (z) = 1+uz^{2^{m-1}-2^{\frac{m+1}{2} }} + vz^{2^{m-1}-2^{ \frac{m-1}{2}}} + wz^{2^{m-1}} + vz^{2^{m-1}+2^{ \frac{m-1}{2}}} + 
uz^{2^{m-1}+2^{\frac{m+1}{2} }} +z^{2^m}, 
\end{eqnarray} 
where 
\begin{eqnarray*}
u &=& \frac{2^{3m-4} - 3 \times  2^{2m-4} + 2^{m-3}}{3}, \\
v &=& \frac{5\times 2^{3m-2} + 3 \times 2^{2m-2} - 2^{m+1}}{3} , \\
w &=&  {2(2^m-1)} (9\times 2^{2m-4}+3\times 2^{m-3}+1). 
\end{eqnarray*}
\end{theorem}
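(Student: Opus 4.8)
The length, dimension, and the \emph{set} of nonzero weights all come straight out of Theorem~\ref{thm-allcodes}. Since $\C_m$ has dimension $3m$ and length $n=2^m-1$, part~(3) of that theorem immediately gives $\overline{\C_m^\perp}^\perp$ dimension $3m+1$ and length $2^m$, and it identifies the nonzero weights as the five weights $w_i$ of $\C_m$ together with the complements $2^m-w_i$ and the value $2^m$. The plan is first to observe that the five weights in Table~\ref{tab-zhou3} are symmetric about $2^{m-1}$, so $\{w_i\}=\{2^m-w_i\}$ and the distinct nonzero weights are exactly the six exponents in~(\ref{eqn-doubledual}); the smallest is $2^{m-1}-2^{(m+1)/2}$, which is positive for $m\ge 5$, giving the stated minimum distance.

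The real content is the three multiplicities $u,v,w$, and the plan here is to avoid the MacWilliams identity altogether and instead read the weight enumerator off the explicit generator matrix $\left[\begin{smallmatrix}\bone & 1\\ G & \bzero\end{smallmatrix}\right]$ of $\overline{\C_m^\perp}^\perp$ produced in the proof of Theorem~\ref{thm-allcodes}, where $G$ generates $\C_m$. Every codeword is $\alpha_0(\bone,1)+(\bc,0)$ with $\alpha_0\in\{0,1\}$ and $\bc\in\C_m$, and the two values of $\alpha_0$ are separated by the final coordinate, so they contribute two disjoint families of $2^{3m}$ codewords each, accounting for all $2^{3m+1}$ words. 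For $\alpha_0=0$ the word is $(\bc,0)$ of weight $\wt(\bc)$; for $\alpha_0=1$ it is $(\bone+\bc,1)$ of weight $(n-\wt(\bc))+1=2^m-\wt(\bc)$.

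Summing over $\bc\in\C_m$ then yields the clean identity $\overline{A^\perp}^\perp(z)=A(z)+z^{2^m}A(1/z)$, where $A(z)$ is the weight enumerator of $\C_m$. Since multiplying $A(1/z)$ by $z^{2^m}$ sends each term $A_{w_i}z^{w_i}$ to $A_{w_i}z^{2^m-w_i}$ and the weights pair as $w_1\leftrightarrow w_5$, $w_2\leftrightarrow w_4$, $w_3\leftrightarrow w_3$, adding the two enumerators gives $u=a+e$, $v=b+d$, and $w=2c$ directly. I expect no conceptual difficulty; the final step is the purely mechanical simplification of these sums, using $(2^{(m-3)/2}+1)+(2^{(m-3)/2}-1)=2^{(m-1)/2}$ for $a+e$ and $(2^{(m-1)/2}+1)+(2^{(m-1)/2}-1)=2^{(m+1)/2}$ for $b+d$, after which collecting powers of two produces the displayed closed forms for $u$ and $v$, with $w=2c$ needing no work. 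As a sanity check, $2+2u+2v+w=2(1+a+b+c+d+e)=2\cdot 2^{3m}=2^{3m+1}$, matching the size of $\overline{\C_m^\perp}^\perp$.
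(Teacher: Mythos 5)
Your proof is correct, and it takes a genuinely different route from the paper's on the only nontrivial point, the multiplicities. The paper also starts from Theorem~\ref{thm-allcodes} and the generator matrix $\left[\begin{smallmatrix}\bone & 1\\ G & \bzero\end{smallmatrix}\right]$, but it exploits that matrix only to get the single value $\overline{A^\perp}^\perp_{2^{m-1}}=2c=w$; it then pins down $u$ and $v$ by invoking the first and third Pless power moments for $\overline{\C_m^\perp}^\perp$ (legitimate because its dual $\overline{\C_m^\perp}$ has minimum distance $8$, which in turn rests on the minimum-distance-$7$ computation in Theorem~\ref{thm-BCHcodeDual}) and solving the resulting $2\times 2$ linear system. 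You instead push the generator-matrix argument to its natural conclusion: the decomposition of every codeword as $\alpha_0(\bone,1)+(\bc,0)$ gives the identity $\overline{A^\perp}^\perp(z)=A(z)+z^{2^m}A(1/z)$, and the symmetry of the five weights of $\C_m$ about $2^{m-1}$ immediately yields $u=a+e$, $v=b+d$, $w=2c$, which simplify to the stated closed forms. Your route is more elementary and more self-contained: it needs no MacWilliams identity, no power moments, and crucially no knowledge of the minimum distance of $\C_m^\perp$ or $\overline{\C_m^\perp}$, whereas the paper's argument cannot even be started before that (tedious) minimum-distance result is in hand; your identity also makes the palindromic shape of the enumerator transparent rather than an outcome of solving equations. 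What the paper's approach buys in exchange is mainly the reuse of standard machinery and an implicit consistency check between the power moments and Table~\ref{tab-zhou3}; as a derivation it is strictly more roundabout. One small point of care in your write-up: the claim that all six nonzero weights genuinely occur (so that the minimum distance really is $2^{m-1}-2^{(m+1)/2}$) needs $u,v,w>0$, which follows from $a,b,c,d,e>0$ for odd $m\geq 5$ — worth saying explicitly, since $e$ contains the factor $2^{(m-3)/2}-1$ that vanishes at $m=3$.
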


\begin{proof}
It follows from Theorem \ref{thm-allcodes} that the code has all the weights given in (\ref{eqn-doubledual}). It remains to determine the frequencies 
of these weights. The weight distribution of the code $\C_{m}$ given in Table \ref{tab-zhou3}  and the generator matrix of the 
code $\overline{\C_{m}^\perp}^\perp$ documented in the proof of Theorem \ref{thm-allcodes} show that 
$$
\overline{A^\perp}^\perp_{2^{m-1}}=2c=w, 
$$
where $c$ was defined in Theorem \ref{thm-BCHcodeDual}. 

We now determine $u$ and $v$. Recall that $\C_{m}^\perp$ has minimum distance $7$. It then follows from Theorem 
\ref{thm-allcodes} that $\overline{\C_{m}^\perp}$ has minimum distance $8$.  The first and third Pless power moments 
say that 
\begin{eqnarray*} 
\left\{ \begin{array}{lll}
\sum_{i=0}^{2^m}  \overline{A^\perp}^\perp_{i}  &=& 2^{3m+1}, \\
\sum_{i=0}^{2^m} i^2 \overline{A^\perp}^\perp_{i}  &=& 2^{3m-1} 2^m(2^m+1).  
\end{array} 
\right. 
\end{eqnarray*} 
These two equations become 
\begin{eqnarray*} 
\left\{ \begin{array}{l}
 1+u+v+c = 2^{3m}, \\
 (2^{2m-2} +2^{m+1}) u +  (2^{2m-2} +2^{m-1}) v + 2^{2m-2}c + 2^{2m-1} = 2^{4m-2}(2^m+1).   
\end{array} 
\right. 
\end{eqnarray*} 
Solving this system of equations proves the desired conclusion on the weight enumerator of this code. 
\end{proof}

Finally, we settle the weight distribution of the code  $\overline{\C_{m}^\perp}$. 

\begin{theorem}\label{thm-3rdcode} 
Let $m \geq 5$ be an odd integer and let $\C_m$ be a binary code with the weight distribution of Table \ref{tab-zhou3}. 
The code  $\overline{\C_{m}^\perp}$ has parameters $[2^m, 2^m-1-3m, 8]$, and its weight 
distribution is given by 
\begin{eqnarray}
2^{3m+1}\overline{A^\perp}_k 
&=&  \left(1+(-1)^k \right) \binom{2^m}{k} + w E_0(k) + u E_1(k) +  v E_2(k) + v E_3(k) +u E_4(k), 
\end{eqnarray} 
where $w, u, v$ are defined in Theorem \ref{thm-lastcode}, and 
\begin{eqnarray*}
 E_0(k) = \frac{1+(-1)^k}{2} (-1)^{\lfloor k/2 \rfloor} \binom{2^{m-1}}{\lfloor k/2 \rfloor}, 
\end{eqnarray*} 
\begin{eqnarray*}
 E_1(k) = \sum_{\substack{0 \le i \le 2^{m-1}-2^{(m+1)/2} \\
0\le j \le 2^{m-1}+2^{(m+1)/2} \\i+j=k}}(-1)^i \binom{2^{m-1}-2^{(m+1)/2}} {i} \binom{2^{m-1}+2^{(m+1)/2}}{j},  
\end{eqnarray*}
\begin{eqnarray*}
 E_2(k) =  \sum_{\substack{0 \le i \le 2^{m-1}-2^{(m-1)/2} \\
0\le j \le 2^{m-1}+2^{(m-1)/2} \\i+j=k}}(-1)^i \binom{2^{m-1}-2^{(m-1)/2}} {i} \binom{2^{m-1}+2^{(m-1)/2}}{j},   
\end{eqnarray*} 
\begin{eqnarray*}
 E_3(k) =  \sum_{\substack{0 \le i \le 2^{m-1}+2^{(m-1)/2} \\
0\le j \le 2^{m-1}-2^{(m-1)/2} \\i+j=k}}(-1)^i \binom{2^{m-1}+2^{(m-1)/2}} {i} \binom{2^{m-1}-2^{(m-1)/2}}{j},  
\end{eqnarray*} 
\begin{eqnarray*}
 E_4(k) =  \sum_{\substack{0 \le i \le 2^{m-1}+2^{(m+1)/2} \\
0\le j \le 2^{m-1}-2^{(m+1)/2} \\i+j=k}}(-1)^i \binom{2^{m-1}+2^{(m+1)/2}} {i} \binom{2^{m-1}-2^{(m+1)/2}}{j},   
\end{eqnarray*} 
and $0 \leq k \leq 2^m$. 
\end{theorem}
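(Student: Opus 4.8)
The plan is to obtain the weight distribution of $\overline{\C_m^\perp}$ by applying the MacWilliams identity (Theorem \ref{thm-MI}) to its dual $\overline{\C_m^\perp}^\perp$, whose weight enumerator has already been determined in Theorem \ref{thm-lastcode}. First I would settle the parameters: the length $2^m$ and the dimension $2^m-1-3m$ follow immediately from Theorem \ref{thm-allcodes}(2), since extending $\C_m^\perp$ (which has length $2^m-1$ and dimension $2^m-1-3m$ by Theorem \ref{thm-BCHcodeDual}) preserves the dimension. For the minimum distance, recall that $\C_m^\perp$ has minimum distance $7$; since $7$ is odd, Theorem \ref{thm-allcodes}(2) gives $\overline{d^\perp}=8$.

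Next, since $\overline{\C_m^\perp}$ is the dual of the $[2^m,\,3m+1]$ code $\overline{\C_m^\perp}^\perp$, I would invoke Theorem \ref{thm-MI} with $q=2$ and $\kappa=3m+1$, which yields
$$2^{3m+1}\,\overline{A^\perp}(z) = (1+z)^{2^m}\,\overline{A^\perp}^\perp\!\left(\frac{1-z}{1+z}\right).$$
Substituting the enumerator (\ref{eqn-doubledual}) and applying the elementary simplification $(1+z)^{2^m}\left(\frac{1-z}{1+z}\right)^{w}=(1-z)^{w}(1+z)^{2^m-w}$ to each exponent $w$ turns the right-hand side into a sum of products of the shape $(1-z)^{w}(1+z)^{2^m-w}$, together with the term $(1-z)^{2^m}$ arising from the weight-$2^m$ codeword.

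Finally, I would extract the coefficient of $z^k$ from each product, exactly as in the proof of Theorem \ref{thm-BCHcodeDual}. The two terms of exponents $0$ and $2^m$ combine to give $(1+(-1)^k)\binom{2^m}{k}$. The central term collapses to $w(1-z)^{2^{m-1}}(1+z)^{2^{m-1}}=w(1-z^2)^{2^{m-1}}$, which contributes only at even $k$, producing precisely $w\,E_0(k)$ with the parity factor $\frac{1+(-1)^k}{2}(-1)^{\lfloor k/2\rfloor}$. The remaining four terms, expanded via the Cauchy product of the two binomial series, yield $u\,E_1(k)$, $v\,E_2(k)$, $v\,E_3(k)$, and $u\,E_4(k)$ respectively, where $u,v,w$ are the constants from Theorem \ref{thm-lastcode}. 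Assembling these pieces gives the claimed formula.

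The computation is essentially routine, being a direct dual transform of an already-established weight enumerator; there is no genuine obstacle. The only point requiring care is the central term, where the collapse to $(1-z^2)^{2^{m-1}}$ forces the parity-dependent shape of $E_0(k)$ and shows that its odd-index contributions vanish. Everything else is bookkeeping of Cauchy products identical in form to the derivation already carried out for $\C_m^\perp$ in Theorem \ref{thm-BCHcodeDual}.
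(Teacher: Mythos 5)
Your proposal is correct and follows essentially the same route as the paper: both determine the parameters from Theorem \ref{thm-allcodes} together with $d^\perp=7$, then apply the MacWilliams identity (Theorem \ref{thm-MI}) to the enumerator of $\overline{\C_m^\perp}^\perp$ from Theorem \ref{thm-lastcode}, reduce each term to the form $(1-z)^{w}(1+z)^{2^m-w}$, and read off coefficients via Cauchy products, with the central term collapsing to $w(1-z^2)^{2^{m-1}}$ exactly as you describe. No gaps; this matches the paper's proof step for step.
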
 

\begin{proof}
By definition, 
$$ 
\dim\left( \overline{\C_{m}^\perp} \right) = \dim\left(\C_{m}^\perp \right)=2^m-1-3m.   
$$
It has been showed in the proof of Theorem \ref{thm-BCHcodeDual} that the minimum distance of $\overline{\C_{m}^\perp}$ 
is equal to $8$. We now prove the conclusion on the weight distribution of this code. 

By Theorems \ref{thm-MI} and \ref{thm-lastcode}, the weight enumerator of  $\overline{\C_{m}^\perp}$ is given by
\begin{eqnarray}\label{eqn-july0}
2^{3m+1}\overline{A^\perp}(z) 
&=&  (1+z)^{2^m}\left[ 1 +   \left(\frac{1-z}{1+z}\right)^{2^m} +              
             w\left(\frac {1-z} {1+z}\right)^{2^{m-1}} \right] +  \nonumber \\
& &  (1+z)^{2^m}\left[ 
             u\left(\frac {1-z} {1+z}\right)^{2^{m-1}-2^{\frac{m+1}{2} }}+   v \left(\frac {1-z} {1+z}\right)^{2^{m-1}-2^{\frac{m-1}{2} }} 
            \right] +    \nonumber \\
      &&  (1+z)^{2^m}\left[ v \left(\frac {1-z} {1+z}\right)^{2^{m-1}+2^{\frac{m-1}{2} }}  + u\left(\frac {1-z} {1+z}\right)^{2^{m-1}+2^{\frac{m+1}{2}}}  \right].  
\end{eqnarray} 
Consequently, we have 
\begin{eqnarray}\label{eqn-j18-1}
2^{3m+1}\overline{A^\perp}(z) 
&=&  (1+z)^{2^m} + (1-z)^{2^m} + w (1-z^2)^{2^{m-1}} + \nonumber \\ 
 & &             u(1-z)^{2^{m-1}-2^{(m+1)/2}}(1+z)^{2^{m-1}+2^{(m+1)/2}} + \nonumber  \\
 & &  v(1-z)^{2^{m-1}-2^{(m-1)/2}}(1+z)^{2^{m-1}+2^{(m-1)/2}} + \nonumber \\ 
& &  v(1-z)^{2^{m-1}+2^{(m-1)/2}}(1+z)^{2^{m-1}-2^{(m-1)/2}} + \nonumber \\  
 & &             u(1-z)^{2^{m-1}+2^{(m+1)/2}}(1+z)^{2^{m-1}-2^{(m+1)/2}}.  
\end{eqnarray} 

We now treat the terms in (\ref{eqn-j18-1}) one by one. We first have 
\begin{eqnarray}\label{eqn-j18-2}
(1+z)^{2^m} + (1-z)^{2^m} = \sum_{k=0}^{2^m} \left(1+(-1)^k \right) \binom{2^m}{k}.  
\end{eqnarray} 
One can easily see that 
\begin{eqnarray}\label{eqn-j18-3}
(1-z^2)^{2^{m-1}} = \sum_{i=0}^{2^{m-1}} (-1)^i \binom{2^{m-1}}{i} z^{2i} = 
\sum_{k=0}^{2^{m}} \frac{1+(-1)^k}{2} (-1)^{\lfloor k/2 \rfloor} \binom{2^{m-1}}{\lfloor k/2 \rfloor} z^{k}.  
\end{eqnarray}  
Notice that 
\begin{eqnarray*}
(1-z)^{2^{m-1}-2^{(m+1)/2}}=\sum_{i=0}^{2^{m-1}-2^{(m+1)/2}} \binom{2^{m-1}-2^{(m+1)/2}}{i} (-1)^i z^i   
\end{eqnarray*} 
and 
\begin{eqnarray*}
(1+z)^{2^{m-1}+2^{(m+1)/2}}=\sum_{i=0}^{2^{m-1}+2^{(m+1)/2}} \binom{2^{m-1}+2^{(m+1)/2}}{i} z^i   
\end{eqnarray*} 
We have then 
\begin{eqnarray}\label{eqn-j18-4}
(1-z)^{2^{m-1}-2^{(m+1)/2}} (1+z)^{2^{m-1}+2^{(m+1)/2}} 
 = \sum_{k=0}^{2^m}  E_1(k) z^k. 
\end{eqnarray} 
Similarly, we have 
\begin{eqnarray}\label{eqn-j18-5}
 (1-z)^{2^{m-1}-2^{(m-1)/2}} (1+z)^{2^{m-1}+2^{(m-1)/2}} 
 = \sum_{k=0}^{2^m}  E_2(k) z^k,   
\end{eqnarray} 
\begin{eqnarray}\label{eqn-j18-6}
(1-z)^{2^{m-1}+2^{(m-1)/2}} (1+z)^{2^{m-1}-2^{(m-1)/2}} 
= \sum_{k=0}^{2^m}   E_3(k) z^k, 
\end{eqnarray}
\begin{eqnarray}\label{eqn-j18-7}
 (1-z)^{2^{m-1}+2^{(m+1)/2}} (1+z)^{2^{m-1}-2^{(m+1)/2}} 
 = \sum_{k=0}^{2^m}  E_4(k) z^k. 
\end{eqnarray}

Plugging (\ref{eqn-j18-2}), (\ref{eqn-j18-3}), (\ref{eqn-j18-4}), (\ref{eqn-j18-5}), (\ref{eqn-j18-6}), and (\ref{eqn-j18-7})  into 
(\ref{eqn-j18-1}) proves the desired conclusion.

\end{proof}

\section{Infinite families of $2$-designs from  $\C_{m}^\perp$ and  $\C_{m}$}\label{sec-2designs}

\begin{theorem}\label{thm-2designs} 
Let $m \geq 5$ be an odd integer and let $\C_m$ be a binary code with the weight distribution of Table \ref{tab-zhou3}. 
Let $\cP=\{0,1,2, \cdots, 2^m-2\}$, and let $\cB$ be the set of the supports of the codewords of $\C_{m}$
with weight $k$, where $A_k \neq 0$. Then $(\cP, \cB)$
is a $2$-$(2^m-1, k, \lambda)$ design, where
\begin{eqnarray*}
\lambda=\frac{k(k-1)A_k}{(2^m-1)(2^m-2)}, 
\end{eqnarray*} 
where $A_k$ is given in Table \ref{tab-zhou3}. 

Let $\cP=\{0,1,2, \cdots, 2^m-2\}$, and let $\cB^\perp$ be the set of the supports of the codewords of  $\C_{m}^\perp$ 
with weight $k$ and $A_k^\perp \neq 0$. Then $(\cP, \cB^\perp)$
is a $2$-$(2^m-1, k, \lambda)$ design, where
\begin{eqnarray*}
\lambda=\frac{k(k-1)A_k^\perp}{(2^m-1)(2^m-2)}, 
\end{eqnarray*}
where $A_k^\perp$ is given in Theorem \ref{thm-BCHcodeDual}.  
\end{theorem}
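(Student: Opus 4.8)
The plan is to deduce both assertions from a single application of the Assmus--Mattson theorem (Theorem \ref{thm-AM1}) with $t=2$. The only nontrivial choice is which code is fed in as the base code and which as its dual. From Table \ref{tab-zhou3} the code $\C_m$ has minimum distance $d=2^{m-1}-2^{(m+1)/2}$, while Theorem \ref{thm-BCHcodeDual} tells us that $\C_m^\perp$ has minimum distance $7$. Since the hypothesis of Assmus--Mattson constrains the number of nonzero weights of the \emph{dual} code, and $\C_m$ is the code with only five nonzero weights, I would apply the theorem with $\C_m^\perp$ playing the role of ``$\C$'' and $\C_m=(\C_m^\perp)^\perp$ playing the role of ``$\C^\perp$''.

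Second, I would verify the hypotheses of Theorem \ref{thm-AM1} in this setting. Here $v=2^m-1$, the base minimum distance is $d=7$, and we take $t=2<7$. The quantity $s$ is the number of distinct nonzero weights of $\C_m$ that lie in the range $0<i\le v-t=2^m-3$. Table \ref{tab-zhou3} lists exactly five nonzero weights, the largest being $2^{m-1}+2^{(m+1)/2}$; a direct check shows $2^{m-1}+2^{(m+1)/2}\le 2^m-3$ for every odd $m\ge 5$, so all five weights fall in the admissible range and $s=5$. The crucial inequality $s\le d-t$ then reads $5\le 7-2=5$, which holds with equality. Theorem \ref{thm-AM1} therefore applies and yields that, for every weight with nonzero frequency, the supports of the codewords of that weight in $\C_m$ form a $2$-design, and likewise for $\C_m^\perp$; this establishes both halves of the statement.

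Finally, I would pin down $\lambda$ by elementary double counting, independently of Assmus--Mattson. Because the code is binary, the support of a weight-$k$ codeword determines that codeword uniquely, so the number of blocks equals $b=A_k$ (respectively $A_k^\perp$). Counting in two ways the incidences between $2$-subsets of $\cP$ and the blocks containing them gives $\lambda\binom{v}{2}=b\binom{k}{2}$, whence $\lambda=A_k\,k(k-1)/[(2^m-1)(2^m-2)]$, and identically with $A_k^\perp$ for $\C_m^\perp$; these are exactly the claimed values.

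The only real subtlety is the role assignment together with the fact that $s\le d-t$ forces $t\le 2$ once $s=5$ and $d=7$: the method is sharp and produces $2$-designs but no more from this pair of codes. Everything else---the two minimum distances, the weight count, and the formula for $\lambda$---is either quoted from the earlier theorems or a one-line computation.
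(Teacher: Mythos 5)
Your proposal is correct and follows essentially the same route as the paper: both apply the Assmus--Mattson theorem with $t=2$, taking $\C_m^\perp$ (minimum distance $7$) as the base code and $\C_m$ (five nonzero weights, so $s=5=d-t$) as its dual, and both use the fact that distinct binary codewords have distinct supports to identify the number of blocks with $A_k$ (resp.\ $A_k^\perp$) and obtain $\lambda$ by counting pairs. Your write-up is in fact slightly more explicit than the paper's, since you spell out the role assignment, the check that all five weights lie below $v-t$, and the double-counting identity $\lambda\binom{v}{2}=A_k\binom{k}{2}$, all of which the paper leaves implicit.
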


\begin{proof}
The weight distribution of $\C_{m}^\perp$ is given in Theorem \ref{thm-BCHcodeDual} and that of 
$\C_{m}$ is given in Table \ref{tab-zhou3}. By Theorem \ref{thm-BCHcodeDual},  the minimum 
distance $d^\perp$ of $\C_{m}^\perp$  is equal to $7$. Put $t=2$. The number of $i$ with $A_i \neq 0$ 
and $1 \leq i \leq 2^m-1 -t$ is $s=5$. Hence, $s=d^\perp-t$. The desired conclusions then follow from Theorem \ref{thm-AM1} 
and the fact that two binary vectors have the same support if and only if they are equal. 
\end{proof}

\begin{example} 
Let $m \geq 5$ be an odd integer and let $\C_m$ be a binary code with the weight distribution of Table \ref{tab-zhou3}. 
Then the BCH code $\C_{m}$ holds five $2$-designs with the following parameters: 
\begin{itemize} 
\item $(v,\, k, \, \lambda)=\left(2^m-1,\  2^{m-1}-2^{\frac{m+1}{2}}, \  
     \frac{2^{\frac{m-5}{2}} \left(2^{\frac{m-3}{2}}+1\right) \left(2^{m-1} - 2^{\frac{m+1}{2}} \right)\left(2^{m-1} - 2^{\frac{m+1}{2}} -1\right)}{6}  \right).$
\item $(v,\, k, \, \lambda)=\left(2^m-1,\  2^{m-1}-2^{\frac{m-1}{2}}, \  
     \frac{2^{m-2} \left(2^{m-1} - 2^{\frac{m-1}{2} } -1 \right)\left( 5 \times 2^{m-1} + 4 \right)}{6}  \right).$ 
\item $(v, \, k, \, \lambda)=\left(2^m-1, \ 2^{m-1}, \  2^{m-2} (9 \times 2^{2m-4}+ 3 \times 2^{m-3} +1) \right).$       
\item $(v,\, k, \, \lambda)=\left(2^m-1,\  2^{m-1}+2^{\frac{m-1}{2}}, \  
     \frac{2^{m-2} \left(2^{m-1} + 2^{\frac{m-1}{2} } -1 \right)\left( 5 \times 2^{m-1} + 4 \right)}{6}  \right).$     
\item $(v,\, k, \, \lambda)=\left(2^m-1,\  2^{m-1}+2^{\frac{m+1}{2}}, \  
     \frac{2^{\frac{m-5}{2}} \left(2^{\frac{m-3}{2}}-1\right) \left(2^{m-1} + 2^{\frac{m+1}{2}} \right)\left(2^{m-1} + 2^{\frac{m+1}{2}} -1\right)}{6}  \right).$ 
\end{itemize} 
\end{example}

\begin{example} 
Let $m \geq 5$ be an odd integer and let $\C_m$ be a binary code with the weight distribution of Table \ref{tab-zhou3}. 
Then the supports of all codewords of weight $7$ in  $\C_{m}^\perp$  give a $2$-$(2^m-1, 7, \lambda)$ 
design, where 
$$ 
\lambda=\frac{ 2^{2(m-1)} - 5 \times 2^{m-1} + 34 }{30}.  
$$
\end{example} 

\begin{proof}
By Theorem \ref{thm-BCHcodeDual}, we have 
$$ 
A^\perp_7=\frac{(2^{m-1}-1) (2^m-1) (2^{2(m-1)} - 5 \times 2^{m-1} + 34)}{630}. 
$$
The desired conclusion on $\lambda$ follows from Theorem \ref{thm-2designs}. 
\end{proof}

\begin{example} 
Let $m \geq 5$ be an odd integer and let $\C_m$ be a binary code with the weight distribution of Table \ref{tab-zhou3}. 
Then the supports of all codewords of weight $8$ in  $\C_{m}^\perp$  give a $2$-$(2^m-1, 8, \lambda)$ 
design, where 
$$ 
\lambda=\frac{ (2^{m-1}-4)(2^{2(m-1)} - 5 \times 2^{m-1} + 34) }{90}.  
$$
\end{example} 

\begin{proof}
By Theorem \ref{thm-BCHcodeDual}, we have 
$$ 
A^\perp_8=\frac{(2^{m-1}-1) (2^{m-1}-4) (2^m-1) (2^{2(m-1)} - 5 \times 2^{m-1} + 34)}{2520}. 
$$
The desired conclusion on $\lambda$ follows from Theorem \ref{thm-2designs}. 
\end{proof}

\begin{example} 
Let $m \geq 7$ be an odd integer and let $\C_m$ be a binary code with the weight distribution of Table \ref{tab-zhou3}. 
Then the supports of all codewords of weight $9$ in  $\C_{m}^\perp$  give a $2$-$(2^m-1, 9, \lambda)$ 
design, where 
$$ 
\lambda=\frac{ (2^{m-1}-4)(2^{m-1}-16)(2^{2(m-1)} -   2^{m-1} + 28) }{315}.  
$$
\end{example} 

\begin{proof}
By Theorem \ref{thm-BCHcodeDual}, we have 
$$ 
A^\perp_9=\frac{(2^{m-1}-1) (2^{m-1}-4) (2^{m-1}-16)  (2^m-1) (2^{2(m-1)} -   2^{m-1} + 28)}{11340}. 
$$
The desired conclusion on $\lambda$ follows from Theorem \ref{thm-2designs}. 
\end{proof}

\section{Infinite families of $3$-designs from  $\overline{\C_{m}^\perp}$ and  $\overline{\C_{m}^\perp}^\perp$}\label{sec-3designs}

\begin{theorem}\label{thm-newdesigns2}
Let $m \geq 5$ be an odd integer and let $\C_m$ be a binary code with the weight distribution of Table \ref{tab-zhou3}. 
Let $\cP=\{0,1,2, \cdots, 2^m-1\}$, and let $\overline{\cB^\perp}^\perp$ be the set of the supports of the codewords of $\overline{\C_{m}^\perp}^\perp$ 
with weight $k$, where $\overline{A^\perp}^\perp_k \neq 0$. Then $(\cP, \overline{\cB^\perp}^\perp)$
is a $3$-$(2^m, k, \lambda)$ design, where
\begin{eqnarray*}
\lambda=\frac{\overline{A^\perp}^\perp_k\binom{k}{3}}{\binom{2^m}{3}}, 
\end{eqnarray*}
where $\overline{A^\perp}^\perp_k$ is given in Theorem \ref{thm-lastcode}. 

Let $\cP=\{0,1,2, \cdots, 2^m-1\}$, and let $\overline{\cB^\perp}$ be the set of the supports of the codewords of
$\overline{\C_{m}^\perp}$ 
with weight $k$ and $\overline{A^\perp}_k \neq 0$. Then $(\cP, \overline{\cB^\perp})$
is a $3$-$(2^m, k, \lambda)$ design, where
\begin{eqnarray*}
\lambda=\frac{\overline{A^\perp}_k \binom{k}{3}}{\binom{2^m}{3}}, 
\end{eqnarray*}
where $\overline{A^\perp}_k$ is given in Theorem \ref{thm-3rdcode}.  
\end{theorem}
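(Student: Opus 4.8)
The plan is to apply the Assmus--Mattson Theorem (Theorem~\ref{thm-AM1}) to the code $\overline{\C_{m}^\perp}$ and its dual $\overline{\C_{m}^\perp}^\perp$, exactly as was done in the proof of Theorem~\ref{thm-2designs} for the pair $(\C_m, \C_m^\perp)$, but now with $t=3$ instead of $t=2$. The two codes form a dual pair of length $v=2^m$, so the single application of the theorem will simultaneously yield designs held by the codewords of each fixed nonzero weight in \emph{both} codes. Once the $t$-design property is established, the value of $\lambda$ is forced: a $3$-$(2^m,k,\lambda)$ design has every block of size $k$ covering $\binom{k}{3}$ triples, and the total number of triples is $\binom{2^m}{3}$, so counting incidences between triples and blocks gives $\lambda \binom{2^m}{3} = b\binom{k}{3}$ where $b$ is the number of blocks. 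Since distinct binary codewords of weight $k$ have distinct supports, $b$ equals the number of such codewords, i.e.\ $\overline{A^\perp}^\perp_k$ or $\overline{A^\perp}_k$ respectively, which yields the stated formulas for $\lambda$.

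The heart of the argument is verifying the numerical hypotheses of Theorem~\ref{thm-AM1} with $t=3$. I would take $\C = \overline{\C_{m}^\perp}$ as the code in the role of ``$\C$'' in Theorem~\ref{thm-AM1}, so that $\C^\perp = \overline{\C_{m}^\perp}^\perp$. By Theorem~\ref{thm-3rdcode} the minimum distance of $\overline{\C_{m}^\perp}$ is $d=8$, so the condition $t<d$ reads $3<8$, which holds. The crucial count is $s$, the number of nonzero weights $i$ of the dual code $\overline{\C_{m}^\perp}^\perp$ lying in the range $0<i\le v-t = 2^m-3$. From the weight enumerator~(\ref{eqn-doubledual}) in Theorem~\ref{thm-lastcode}, the nonzero weights of $\overline{\C_{m}^\perp}^\perp$ are the five values $2^{m-1}\pm 2^{(m+1)/2}$, $2^{m-1}\pm 2^{(m-1)/2}$, $2^{m-1}$, together with the full weight $2^m$. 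Of these, all five of the ``middle'' weights lie strictly below $2^m-3$, while the weight $2^m$ exceeds $2^m-3$ and is therefore excluded from the count. Hence $s=5$, and the Assmus--Mattson condition $s\le d-t$ becomes $5\le 8-3=5$, which holds with equality.

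I expect the main obstacle to be purely bookkeeping: confirming that the excluded weight is exactly the top weight $2^m$ and that it alone falls outside the window $(0,\,2^m-3]$, so that the count $s=5$ is correct. This is why the construction deliberately passes to the \emph{extended} code $\overline{\C_{m}^\perp}$: extending shifts the relevant minimum distance from $d^\perp=7$ up to $8$ (as recorded in part~(2) of Theorem~\ref{thm-allcodes} and reconfirmed in Theorem~\ref{thm-3rdcode}), which is precisely what is needed to buy the extra margin $d-t = 5$ matching $s=5$; without the extension one would only have $d=7$ and the bound would fail for $t=3$. One should also check that the full-length weight $2^m$, which corresponds to the all-ones codeword, does not spoil the argument --- it is harmless because its support is the entire point set and, in any case, it lies outside the counted range. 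With these facts in hand, Theorem~\ref{thm-AM1} applies and delivers both families of $3$-designs, after which the $\lambda$ formulas follow immediately from the incidence count described above, with the frequencies $\overline{A^\perp}^\perp_k$ and $\overline{A^\perp}_k$ supplied by Theorems~\ref{thm-lastcode} and~\ref{thm-3rdcode}.
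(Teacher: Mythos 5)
Your proposal is correct and follows essentially the same route as the paper: apply the Assmus--Mattson Theorem with $\C = \overline{\C_m^\perp}$ (minimum distance $d=8$ from Theorem~\ref{thm-3rdcode}), count $s=5$ nonzero weights of $\overline{\C_m^\perp}^\perp$ in the range $(0, 2^m-3]$ (the weight $2^m$ being excluded), note $s = d-t$ for $t=3$, and recover $\lambda$ from the incidence count together with the fact that distinct binary codewords have distinct supports. In fact your write-up is more careful than the paper's own proof, which contains notational slips (it attributes the minimum distance $8$ to $\overline{\C_m^\perp}^\perp$ rather than $\overline{\C_m^\perp}$, and counts the weights using $\overline{A^\perp}_i$ where $\overline{A^\perp}^\perp_i$ is meant).
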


\begin{proof}
The weight distributions of $\overline{\C_{m}^\perp}^\perp$  and $\overline{\C_{m}^\perp}$   
are described in Theorems \ref{thm-lastcode} and \ref{thm-3rdcode}.   
Notice that the minimum distance $\overline{d^\perp}$ of $\overline{\C_{m}^\perp}^\perp$ is equal to $8$. Put $t=3$. The number
of $i$ with $\overline{A^\perp}_i \neq 0$ and $1 \leq i \leq 2^m -t$ is $s=5$. Hence, $s=\overline{d^\perp}-t$. Clearly, two binary vectors have the 
same support if and only if they are equal. 
The desired conclusions then follow from Theorem \ref{thm-AM1}. 
\end{proof} 

\begin{example} 
Let $m \geq 5$ be an odd integer and let $\C_m$ be a binary code with the weight distribution of Table \ref{tab-zhou3}. 
Then $\overline{\C_{m}^\perp}^\perp$  holds five $3$-designs with the following parameters: 
\begin{itemize}
\item $(v,\, k, \, \lambda)=\left(2^m,\  2^{m-1}-2^{\frac{m+1}{2}}, \  
    \frac{  \left(2^{m-1} - 2^{\frac{m+1}{2}} \right) \left(2^{m-1} - 2^{\frac{m+1}{2}} -1\right) \left(2^{m-1} - 2^{\frac{m+1}{2}} -2\right)}{48}  \right).$
\item $(v,\, k, \, \lambda)=\left(2^m,\  2^{m-1}-2^{\frac{m-1}{2}}, \  
     \frac{  2^{\frac{m-1}{2} } \left(2^{m-1} - 2^{\frac{m-1}{2} } -1 \right)  \left( 2^{\frac{m-1}{2} } -2 \right)\left( 5 \times 2^{m-3} + 1 \right)}{3}   \right).$

\item $(v, \, k, \, \lambda)=\left(2^m, \ 2^{m-1}, \  (2^{m-2}-1) (9 \times 2^{2m-4}+ 3 \times 2^{m-3} +1) \right).$      

\item $(v,\, k, \, \lambda)=\left(2^m,\  2^{m-1}+2^{\frac{m-1}{2}}, \  
     \frac{  2^{\frac{m-1}{2} } \left(2^{m-1} + 2^{\frac{m-1}{2} } -1 \right)  \left( 2^{\frac{m-1}{2} } +2 \right)\left( 5 \times 2^{m-3} + 1 \right)}{3}   \right).$

\item $(v,\, k, \, \lambda)=\left(2^m,\  2^{m-1}+2^{\frac{m+1}{2}}, \  
    \frac{  \left(2^{m-1} + 2^{\frac{m+1}{2}} \right) \left(2^{m-1} + 2^{\frac{m+1}{2}} -1\right) \left(2^{m-1} + 2^{\frac{m+1}{2}} -2\right)}{48}  \right).$
\end{itemize} 
\end{example}

\begin{example} 
Let $m \geq 5$ be an odd integer and let $\C_m$ be a binary code with the weight distribution of Table \ref{tab-zhou3}. 
Then the supports of all codewords of weight $8$ in  $\overline{\C_{m}^\perp}$ give a $3$-$(2^m, 8, \lambda)$ 
design, where 
$$ 
\lambda=\frac{2^{2(m-1)} - 5 \times 2^{m-1}+34}{30}. 
$$
\end{example} 

\begin{proof}
By Theorem \ref{thm-3rdcode}, we have 
$$ 
\overline{A^\perp}_8=\frac{2^m(2^{m-1}-1)(2^m-1)(2^{2(m-1)} - 5 \times 2^{m-1}+34)}{315}. 
$$
The desired value of $\lambda$ follows from Theorem \ref{thm-newdesigns2}.  
\end{proof}

\begin{example} 
Let $m \geq 7$ be an odd integer and let $\C_m$ be a binary code with the weight distribution of Table \ref{tab-zhou3}. 
Then the supports of all codewords of weight $10$ in  $\overline{\C_{m}^\perp}$ give a $3$-$(2^m, 10, \lambda)$ 
design, where 
$$ 
\lambda=\frac{(2^{m-1}-4) (2^{m-1}-16) (2^{2(m-1)}-2^{m-1}+28)}{315}. 
$$
\end{example} 

\begin{proof}
By Theorem \ref{thm-3rdcode}, we have 
$$ 
\overline{A^\perp}_{10}=\frac{2^{m-1}(2^{m-1}-1)(2^m-1)(2^{m-1}-4) (2^{m-1}-16) (2^{2(m-1)}-2^{m-1}+28) }{4 \times 14175}. 
$$
The desired value of $\lambda$ follows from Theorem \ref{thm-newdesigns2}.  
\end{proof}

\begin{example} 
Let $m \geq 5$ be an odd integer and let $\C_m$ be a binary code with the weight distribution of Table \ref{tab-zhou3}. 
Then the supports of all codewords of weight $12$ in  $\overline{\C_{m}^\perp}$ give a $3$-$(2^m, 12, \lambda)$ 
design, where 
$$ 
\lambda=\frac{(2^{h-2}-1) (2 \times 2^{5h}  - 55  \times 2^{4h} + 647  \times 2^{3h} - 2727  \times 2^{2h} + 11541  \times 2^{h} - 47208)}{2835}  
$$
and $h=m-1$.
\end{example} 

\begin{proof}
By Theorem \ref{thm-3rdcode}, we have 
$$ 
\overline{A^\perp}_{12}=\frac{\epsilon^2 (\epsilon^2-1) (\epsilon^2-4) (2 \epsilon^2-1) (2 \epsilon^{10} - 55 \epsilon^8 + 647 \epsilon^6 - 2727 \epsilon^4 + 11541 \epsilon^2 - 47208) }{8 \times 467775}, 
$$
where $\epsilon=2^{(m-1)/2}$. 
The desired value of $\lambda$ follows from Theorem \ref{thm-newdesigns2}.  
\end{proof}

\section{Two families of binary cyclic codes with the weight distribution of Table \ref{tab-zhou3}}\label{sec-examplecodes} 

To prove the existence of the $2$-designs in Section \ref{sec-2designs} and the $3$-designs in Section \ref{sec-3designs}, we present  
two families of binary codes of length $2^m-1$ with the weight distribution of Table \ref{tab-zhou3}.

Let $n=q^m-1$, where $m$ is a positive integer. Let $\alpha$ be a generator of $\gf(q^m)^*$. 
For any $i$ with $0 \leq i \leq n-1$, let $\m_i(x)$ denote the minimal polynomial of $\beta^i$ 
over $\gf(q)$. For any $2 \leq \delta \leq n$, define 
\begin{eqnarray}\label{eqn-BCHgeneratorPolyn}
g_{(q,n,\delta,b)}(x)=\lcm(\m_{b}(x), \m_{b+1}(x), \cdots, \m_{b+\delta-2}(x)), 
\end{eqnarray} 
where $b$ is an integer, $\lcm$ denotes the least common multiple of these minimal polynomials, and the addition 
in the subscript $b+i$ of $\m_{b+i}(x)$ always means the integer addition modulo $n$. 
Let $\C_{(q, n, \delta,b)}$ denote the cyclic code of length $n$ with generator 
polynomial $g_{(q, n,\delta, b)}(x)$. $\C_{(q, n, \delta, b)}$ is called  a \emph{primitive BCH code}  with \emph{designed distance} $\delta$. 
When $b=1$, the set $\C_{(q, n, \delta, b)}$ is called a \emph{narrow-sense primitive BCH code}.

Although primitive BCH codes are not good asymptotically, they are among the best linear codes when the length 
of the codes is not very large \cite[Appendix A]{DingBook}. So far, we have very limited knowledge of 
BCH codes, as the dimension and minimum distance of BCH codes are in general open, in spite of some 
recent progress \cite{Ding,DDZ15}. However, in a few cases the weight distribution of a BCH code can be 
settled. The following theorem introduces such a case. 

\begin{theorem}\label{thm-BCHcode}
Let $m \geq 5$ be an odd integer and let $\delta=2^{m-1}-1-2^{(m+1)/2}$. Then the BCH code $\C_{(2, 2^m-1, \delta, 0)}$ 
has length $n=2^m-1$, dimension $3m$, and the weight distribution in Table \ref{tab-zhou3}. 
\end{theorem}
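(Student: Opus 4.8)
The plan is to establish Theorem \ref{thm-BCHcode} in two stages: first pin down the dimension (and see why the designed distance is calibrated the way it is) through a $2$-cyclotomic coset analysis modulo $n=2^m-1$, and then extract the five weights and their frequencies from the Walsh spectrum of an associated family of quadratic Boolean functions. Throughout, $C_s$ denotes the $2$-cyclotomic coset of $s$ modulo $n$.

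First I would compute the defining set $T$ of $\C_{(2,2^m-1,\delta,0)}$, namely the union of the cosets $C_s$ with $0\le s\le \delta-2$. A coset lies \emph{outside} $T$ exactly when its coset leader (smallest element) is at least $\delta-1=2^{m-1}-2-2^{(m+1)/2}$. The key combinatorial claim is that there are exactly three such cosets, with leaders
$$
2^{m-1}-1,\qquad 2^{m-1}-2^{(m-1)/2}-1,\qquad 2^{m-1}-2^{(m+1)/2}-1=\delta,
$$
which are the cosets of $-1$, $-(2^{(m-1)/2}+1)$, and $-(2^{(m-3)/2}+1)$ respectively, and that each has full size $m$. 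Since multiplication by $2$ is a cyclic shift of the $m$-bit binary representation, the condition ``leader $\ge \delta-1$'' forces a rigid high-order bit pattern, so the verification reduces to checking that no other binary necklace of length $m$ keeps all of its rotations in the narrow top window $[\delta-1,\,2^{m-1}-1]$; equivalently, the fourth-largest coset leader is at most $\delta-2$. Granting this, $|T|=n-3m$, so $\dim\C_{(2,2^m-1,\delta,0)}=3m$, and one sees that $\delta$ is precisely the third-largest coset leader, which is exactly what makes the BCH designed distance line up with the weights in Table \ref{tab-zhou3}.

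Having identified the three nonzero cosets, I would pass to a trace representation: up to the choice of coordinate ordering, every codeword is $c(\beta)=\big(Q_\beta(x)\big)_{x\in\gf(2^m)^{*}}$ with $\beta=(\beta_1,\beta_2,\beta_3)\in\gf(2^m)^3$ and $Q_\beta(x)=\Tr\!\big(\beta_1 x+\beta_2 x^{2^{(m-3)/2}+1}+\beta_3 x^{2^{(m-1)/2}+1}\big)$. The two nontrivial exponents are Gold exponents $2^k+1$, so $Q_\beta$ is a quadratic Boolean function of $x$, and the Hamming weight of $c(\beta)$ equals $2^{m-1}-\tfrac12 S(\beta)$ with $S(\beta)=\sum_{x\in\gf(2^m)}(-1)^{Q_\beta(x)}$. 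By the standard theory of quadratic forms over $\gf(2)$, $S(\beta)$ is $0$ or $\pm 2^{(m+\ell)/2}$, where $\ell$ is the dimension of the radical of the bilinear form attached to $Q_\beta$; because $m$ is odd, $\ell$ is odd, and only $\ell\in\{1,3\}$ can occur. This already produces exactly the five weights $2^{m-1}$ and $2^{m-1}\pm 2^{(m\mp1)/2}$ recorded in Table \ref{tab-zhou3}.

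The main obstacle is the final count: determining how often each value of $S(\beta)$ occurs, i.e.\ the joint rank/balance distribution of the three-parameter family $\{Q_\beta\}$. I would attack this either directly, by counting the triples $(\beta_1,\beta_2,\beta_3)$ for which the associated bilinear form has radical of each admissible dimension (a quadratic-form rank computation over $\gf(2^m)$, reducible to Gauss sums), or, more cheaply, by setting up a small linear system for the five unknown frequencies $a,b,c,d,e$ from the first few Pless power moments, exactly in the spirit of the proof of Theorem \ref{thm-lastcode}. The latter route needs the independent input that $\C_{(2,2^m-1,\delta,0)}^{\perp}$ has minimum distance at least $7$, so that the relevant low-order dual weights vanish; this can be obtained from a van Lint--Wilson shifting / BCH-type bound applied to the dual's defining set, and does not presuppose Table \ref{tab-zhou3}. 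Solving that system, or completing the rank count, reproduces the multiplicities of Table \ref{tab-zhou3} and finishes the proof.
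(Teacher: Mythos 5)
First, note that the paper itself contains \emph{no} proof of Theorem~\ref{thm-BCHcode}: its entire proof is the sentence ``A proof can be found in \cite{DFZ}.'' So there is no in-paper argument to compare against; your proposal is an attempt to supply the outsourced proof. Its skeleton --- coset-leader analysis for the dimension, a trace representation with two Gold exponents plus quadratic-form theory for the list of weights, and a counting argument for the multiplicities --- is the right one, and is essentially the program of \cite{DFZ} combined with a Kasami-type spectral computation in the spirit of \cite{Kasa69}. Your structural claims are in fact true: the complement of the defining set consists exactly of the three full-size cosets of $-1$, $-(2^{(m-1)/2}+1)$, $-(2^{(m-3)/2}+1)$; consequently the code is $\left\{\left(\Tr\left(\beta_1x+\beta_2x^{2^{(m-3)/2}+1}+\beta_3x^{2^{(m-1)/2}+1}\right)\right)_{x\in\gf(2^m)^*}\right\}$; and the radical of the associated bilinear form is the kernel of a linearized polynomial which, after a Frobenius twist, has degree $8$, so its dimension $\ell$ lies in $\{1,3\}$ and only the five weights of Table~\ref{tab-zhou3} can occur.

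As a proof, however, the proposal has genuine gaps at exactly the load-bearing points. (i) The three-coset claim is introduced with ``granting this''; the run-length argument you sketch (each $0$ forces a long run of $1$s) rules out cosets of binary weight $\geq 3$ only when $m\geq 11$, so $m\in\{5,7,9\}$ requires a finer analysis of the cyclic gaps --- this verification is the real content of the dimension theorem in \cite{DFZ}. (ii) The multiplicity count, which you yourself identify as the main obstacle, is not carried out by either of the two suggested routes. (iii) Most seriously, the moment route as stated hinges on the unproven assertion that $\C_{(2,2^m-1,\delta,0)}^{\perp}$ has minimum distance at least $7$ ``from van Lint--Wilson shifting.'' The dual's defining set is $C_1\cup C_{2^{(m-3)/2}+1}\cup C_{2^{(m-1)/2}+1}$, whose plain BCH bound falls far short of $7$ for $m\geq 7$ (the longest runs of consecutive exponents have length $3$), you exhibit no shifting scheme, and the paper obtains $d^{\perp}=7$ only \emph{a posteriori} from Table~\ref{tab-zhou3} via MacWilliams (Theorem~\ref{thm-BCHcodeDual}), so it cannot be borrowed without circularity. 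The gap is repairable, though: with the five weights known, the Pless power moments for $\nu=0,1,2,3,4$ already determine the five frequencies uniquely (the coefficient matrix is Vandermonde-equivalent), and these moments require only $d^{\perp}\geq 5$. That weaker bound is classical: since $\gcd((m-1)/2,m)=1$, Gold's theorem says $x^{2^{(m-1)/2}+1}$ is APN, i.e.\ the cyclic code with zeros $\alpha,\alpha^{2^{(m-1)/2}+1}$ is double-error-correcting, and $\C_{(2,2^m-1,\delta,0)}^{\perp}$ is a subcode of it. With that substitution your second route closes and reproduces Table~\ref{tab-zhou3}; as written, the proposal is a sound plan rather than a proof.
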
  

\begin{proof}
A proof can be found in \cite{DFZ}. 
\end{proof} 

It is known that the dual of a BCH code may not be a BCH code. The following theorem describes a family of cyclic codes 
having the weight distribution of Table \ref{tab-zhou3}, which may not be BCH codes. 

\begin{theorem}
Let $m \geq 5$ be an odd integer. Let $\C_m$ be the dual of the narrow-sense primitive  BCH code $\C_{(2, 2^m-1, 7, 1)}$. 
Then $\C_m$ has the weight distribution of Table \ref{tab-zhou3}.  
\end{theorem}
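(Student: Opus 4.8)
The plan is to identify $\C_m$ explicitly as a cyclic code through its nonzeros, recast its codewords as evaluations of quadratic Boolean functions, and then read off the five weights from the theory of quadratic forms over $\gf(2)$.

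First I would pin down the algebraic structure. Writing $g(x)$ for the generator polynomial of $\C_{(2,2^m-1,7,1)}$, we have $g=\m_1\m_3\m_5$, where $\m_i$ is the minimal polynomial of $\alpha^i$ and $\alpha$ generates $\gf(2^m)^*$; here $\m_2=\m_1$ and $\m_6=\m_3$ since the code is binary. For odd $m\ge 5$ the $2$-cyclotomic cosets of $1,3,5$ modulo $2^m-1$ are pairwise distinct and each has size $m$ (using $3\nmid 2^m-1$ and $5\nmid 2^m-1$ for odd $m$), so $\deg g=3m$ and $\dim \C_{(2,2^m-1,7,1)}=2^m-1-3m$. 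Hence $\dim\C_m=3m$, in agreement with Table \ref{tab-zhou3}. Dualizing the narrow-sense BCH code, $\C_m$ is the cyclic code whose nonzeros lie in the cyclotomic cosets of $-1,-3,-5$; applying the coordinate permutation $x\mapsto x^{-1}$ of $\gf(2^m)^*$, which preserves Hamming weight, we may replace these by $1,3,5$. Thus, up to permutation equivalence,
$$
\C_m=\left\{\bc(a,b,c)=\big(\Tr(ax+bx^{3}+cx^{5})\big)_{x\in\gf(2^m)^*}\ :\ a,b,c\in\gf(2^m)\right\}.
$$

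Next I would convert Hamming weight into an exponential sum. For each $(a,b,c)$,
$$
\wt(\bc(a,b,c))=2^{m-1}-\tfrac12\,S(a,b,c),\qquad S(a,b,c)=\sum_{x\in\gf(2^m)}(-1)^{\Tr(ax+bx^{3}+cx^{5})},
$$
using $\Tr(0)=0$ so that the point $x=0$ contributes harmlessly. The key structural observation is that $3=2+1$ and $5=2^2+1$ are Gold exponents, so $\Tr(bx^{3}+cx^{5})$ is a quadratic Boolean form in the $\gf(2)$-coordinates of $x$ while $\Tr(ax)$ is linear. By the classical evaluation of such sums, $S(a,b,c)\in\{0,\pm 2^{(m+1)/2},\pm 2^{(m+3)/2}\}$, the value being governed by the rank of the quadratic part $Q_{b,c}(x)=\Tr(bx^{3}+cx^{5})$ (for odd $m$ this rank is $m-1$ or $m-3$) together with whether the linear functional $\Tr(ax)$ lies in the image of the associated bilinear form. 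Translating back through the displayed identity, these five values of $S$ are exactly the five nonzero weights $2^{m-1}$ and $2^{m-1}\pm 2^{(m\pm1)/2}$ listed in Table \ref{tab-zhou3}.

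The main work, and the step I expect to be the real obstacle, is the \emph{frequency} count: determining, as $(b,c)$ ranges over $\gf(2^m)^2$, how many pairs give $Q_{b,c}$ of rank $m-1$ versus $m-3$ (together with the single degenerate pair $(0,0)$), and then, for each fixed quadratic part, how the $2^m$ choices of $a$ split $S$ among $0$ and the two nonzero possibilities. This is the genuinely arithmetic heart of the problem: it amounts to counting the solutions of the linearized polynomials attached to the pencil $bx^{3}+cx^{5}$, and is precisely the Kasami-type computation that fixes the weight distribution of the dual of the triple-error-correcting primitive BCH code. I would carry it out via the rank distribution of these quadratic forms and the standard character-sum bookkeeping (alternatively, one may simply invoke the known weight distribution of this classical dual code). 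Once the three coset sizes and the per-form splitting are in hand, summing the contributions reproduces the frequencies $a,b,c,d,e$ of Table \ref{tab-zhou3} and completes the proof.
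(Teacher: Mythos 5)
For this theorem the paper supplies no argument of its own: the entire proof is the citation to Kasami \cite{Kasa69}. Your proposal should therefore be measured against that reference, and its skeleton is exactly the right one --- it is in fact the route Kasami's computation takes. The reductions you make are all correct: $g=\m_1\m_3\m_5$ with three disjoint full-size $2$-cyclotomic cosets, so $\dim \C_m = 3m$; the dual's nonzeros are the reciprocals of the BCH zeros, so after the weight-preserving permutation $x \mapsto x^{-1}$ one gets the trace representation $\bc(a,b,c) = \big(\Tr(ax+bx^3+cx^5)\big)_{x \in \gf(2^m)^*}$; the dictionary $\wt(\bc(a,b,c)) = 2^{m-1} - \frac{1}{2}S(a,b,c)$; and the observation that $3$ and $5$ are Gold exponents, so that $S$ is the character sum of a quadratic form plus a linear functional, whence $S \in \{0, \pm 2^{(m+1)/2}, \pm 2^{(m+3)/2}\}$ (your claim that the rank is $m-1$ or $m-3$ does need one unstated ingredient: the radical is the kernel of a linearized polynomial of $2$-degree $16$, hence has dimension at most $4$, and parity of the alternating bilinear form forces dimension $1$ or $3$; this is easy but should be said). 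This correctly pins down the five candidate nonzero weights of Table \ref{tab-zhou3}.

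However, the step you defer is not bookkeeping; it is the theorem. The statement to be proved is precisely an assertion about \emph{frequencies}: one must count, over the pairs $(b,c)\neq(0,0)$, how many give $Q_{b,c}$ rank $m-1$ and how many give rank $m-3$, determine the type (sign) distribution within each rank class, and then track how the $2^m$ choices of $a$ split $S$ among $0$ and the two signed values. None of the five frequencies in Table \ref{tab-zhou3} can be recovered without this count, and it is exactly the nontrivial content of \cite{Kasa69} --- the analysis of the kernels of the linearized polynomials attached to the pencil $bx^3+cx^5$. So read as a self-contained proof, your proposal has a genuine gap at this one point, and it sits at the place where all the arithmetic difficulty lives; read with your fallback of invoking the known weight distribution of this classical dual code, it becomes the same proof-by-citation that the paper gives, prefixed by a correct and useful reconstruction of what the cited argument actually does.
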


\begin{proof}
A proof can be found in \cite{Kasa69}. 
\end{proof}

\section{Summary and concluding remarks} 

In this paper, with any binary linear code of length $2^m-1$ and the weight distribution of Table \ref{tab-zhou3}, a huge number of infinite 
families of $2$-designs and $3$-designs with various block sizes were constructed. These constructions clearly show that the 
coding theory approach to constructing $t$-designs are in fact promising, and may stimulate further investigations in this direction.  
It is open if the codewords of a fixed weight in a family of linear codes can hold an infinite family of $t$-designs for some $t \geq 4$.  

It is noticed that the technical details of this paper are tedious. However, one has to settle the weight distribution of a linear code and the minimum distance of its dual at the same time, if one 
would like to employ the Assmus-Mattson Theorem for the construction of $t$-designs. Note that it 
could be very difficult to prove that a linear code has minimum weight $7$.   
This explains why the proofs of some of the theorems are messy and tedious, but necessary.

\section*{Acknowledgments} 

The research of C. Ding was supported by the Hong Kong Research Grants Council, under Project No. 16300415.

\end{document}